\newtheorem{lemma}{Lemma}
\newtheorem{proposition}{Proposition}
\newtheorem{corollary}{Corollary}
\newtheorem{remark}{Remark}
\newcommandx*\xvec[2][1=0,2=n]{x_{#1},\ldots,x_{#2}}
\newcommandx*\m[1]{\mathcal{#1}}
\newcommand{\R}{\mathbb{R}}
\newcommand{\Z}{\mathbb{Z}}
\newcommand{\disjcup}{\bigsqcup}
\newcommand{\bx}{\bm{x}}
\newcommand{\bX}{\bm{X}}
\newcommand{\bY}{\bm{Y}}
\newcommand{\ba}{\bm{a}}
\newcommand{\bA}{\bm{A}}
\newcommand{\bb}{\bm{b}}
\newcommand{\bB}{\bm{B}}
\newcommand{\be}{\bm{e}}
\newcommand{\bK}{\bm{K}}
\newcommand{\bt}{\bm{t}}
\newcommand{\bE}{\bm{E}}
\newcommand{\bI}{\bm{I}}
\newcommand{\bmu}{\boldsymbol{\mu}}
\newcommand{\bxi}{\boldsymbol{\xi}}
\newcommand{\bepsilon}{\boldsymbol{\epsilon}}
\newcommand{\bPhi}{\boldsymbol{\Phi}}
\newcommand{\bPsi}{\boldsymbol{\Psi}}
\newcommand{\bSigma}{\boldsymbol{\Sigma}}
\newcommand{\beeta}{\boldsymbol{\eta}}
\newcommand{\bLambda}{\boldsymbol{\Lambda}}
\newcommand{\condmu}{\bmu}
\newcommand{\condK}{\bSigma}
\newcommand{\MatKtheta}{\bK_\theta}
\newcommand{\complexity}{\mathcal{O}}
\newcommand{\blockset}{\mathcal{B}}
\newcommand{\sizeblock}{B}
\newcommand{\partition}{\mathcal{P}}
\newcommand{\subpartition}{\mathcal{P}}
\newcommand{\subdivset}{\mathcal{S}}
\newcommand{\setL}{\mathcal{L}}
\newcommand{\setLSP}{\setL^{\subdivset}_{\subpartition}}
\newcommand{\setLSPstar}{\setL^{\subdivset^\star}_{\subpartition^\star}}
\newcommand{\setLSBj}{\setL^{\subdivset}_{\blockset_j}}
\newcommand{\setLSBjstar}{\setL^{\subdivset^\star}_{\blockset^\star_j}}
\newcommand{\ul}{\underline{\ell}}
\newcommand{\uk}{\underline{k}}
\newcommand{\basis}{\beta}
\newcommand{\basisstar}{\basis^{\subdivset^\star}_{\subpartition^\star}}
\newcommand{\basisSP}{\basis^{\subdivset}_{\subpartition}}
\newcommand{\basisSPstar}{\basis^{\subdivset^\star}_{\subpartition^\star}}
\newcommand{\basisSBj}{\basis^{\subdivset}_{\blockset_j}}
\newcommand{\basisSi}{\basis_{s^{(i)}}}
\newcommand{\spaceESP}{E^\subdivset_{\subpartition}}
\newcommand{\setTSBj}{T^{\subdivset}_{\blockset_j}}
\newcommand{\projSP}{P^{\mathcal{S}}_{\mathcal{P}}}
\newcommand{\hatfun}{\widehat{\phi}_{u,v,w}}
\newcommand{\hatfunSik}{\phi^{s^{(i)}}_k}
\newcommand{\hatfunlj}{\phi_{\ul_j}}
\newcommand{\bPhiSBj}{\bPhi^{\subdivset}_{\blockset_j}}
\newcommand{\bPhiSP}{\bPhi^{\subdivset}_{\subpartition}}
\newcommand{\Lcritfull}{
\Lcrit((\subdivset,\subpartition),(\subdivset^\star,\subpartition^\star))}
\newcommand{\SEcrit}{\mathrm{SE}}
\newcommand{\MaxModCrit}{\mathcal{K}}
\newcommand{\Lcrit}{\mathrm{L2Mod}}
\newcommand{\SMSE}{\mathrm{SMSE}}
\newcommand{\convex}{\mathcal{C}}
\newcommand{\domainD}{[0, 1]^D}
\newcommand{\continuous}{\mathcal{C}^{0}}
\newcommand{\normal}{\mathcal{N}}
\newcommand{\trace}{\mathrm{Tr}}
\newcommand{\var}{\widehat{\mathrm{VAR}}}
\newcommand{\GP}{Y}
\newcommand{\kernl}{k}
\newcommand{\BAGP}{\GP^{\subpartition}}
\newcommand{\kernlBAGP}{\kernl_{\subpartition}}
\newcommand{\GPj}{\GP_j}
\newcommand{\kernlj}{\kernl_j}
\newcommand{\kernmat}{r}
\newcommand{\GPSP}{\widetilde{\GP}^{\subdivset}_{\subpartition}}
\newcommand{\kernlSP}{\widetilde{k}^{\subdivset}_{\subpartition}}
\newcommand{\predictor}{\widehat{Y}^{\subdivset}_{\subpartition}}
\newcommand{\predictorstar}{\widehat{Y}^{\subdivset^\star}_{\subpartition^\star}}
\newcommand{\nknots}{|\setL^{\subdivset}_{\subpartition}|}
\newcommand{\Mstar}{\mathcal{M}^\star}
\title{Block-Additive Gaussian Processes under Monotonicity Constraints}
\author{
    Mathis Deronzier$^{1,3,\ast}$, Andr\'es F. L\'opez-Lopera$^{2}$, Fran\c{c}ois Bachoc$^{1,5}$, Olivier Roustant$^{3}$ \\
    and J\'er\'emy Rohmer$^{4}$
    \\\\
    \small ${}^{1}$Institut de Math\'ematiques de Toulouse (IMT), Univ. Paul Sabatier, F-31062 Toulouse, France.\\
    \small ${}^{2}$Univ. Polytechnique Hauts-de-France, CERAMATHS, F-59313 Valenciennes, France.\\ 
    \small ${}^{3}$IMT, UMR5219 CNRS, INSA, F-31077 Toulouse c\'edex 4, France.\\
    \small ${}^{4}$BRGM, 3 avenue Claude Guillemin, F-45060 Orléans c\'edex 2, France.\\
      \small ${}^{5}$Institut Universitaire de France (IUF).\\
    \small ${}^{\ast}$Corresponding author
}
\date{}
\begin{document}

\maketitle

\renewcommand{\contentsname}{Summary} 


\begin{abstract}
    We generalize the additive constrained Gaussian process framework to handle interactions between input variables while enforcing monotonicity constraints everywhere on the input space. The block-additive structure of the model is particularly suitable in the presence of interactions, while maintaining tractable computations. In addition, we develop a sequential algorithm, MaxMod, for model selection (i.e., the choice of the active input variables and of the blocks). We speed up our implementations through efficient matrix computations and thanks to explicit expressions of criteria involved in MaxMod. The performance and scalability of our methodology are showcased with several numerical examples in dimensions up to 120, as well as in a 5D real-world coastal flooding application, where interpretability is enhanced by the selection of the blocks.
\end{abstract}

\section{Introduction}\label{sec:intro}

\paragraph{Constrained Gaussian processes (GPs).}
GPs are a central tool within the family of non-parametric Bayesian models, offering significant theoretical and computational advantages~\cite{Rasmussen2005GP}. They have been successfully applied in various research fields, including numerical code approximations \cite{Sacks89Design}, global optimization \cite{Jones1998EGO,bachoc2020gaussian}, model calibration \cite{kennedy2001bayesian},
geostatistics \cite{chiles2009geostatistics,mu2019intrinsic} and machine learning \cite{Rasmussen2005GP}. 

It is well-known that accounting for inequality constraints (e.g. boundedness, monotonicity, convexity) in GPs enhances prediction accuracy and yields more realistic uncertainties \cite{DaVeiga2012GPineqconst,DaVeiga2020GPineqconst,Pallavi2019BayesianShapeGPs,Wang2019DiffEq,Bachoc2019cMLE}. 
These constraints correspond to available information on functions over which GP priors are considered. Constraints such as positivity and monotonicity appear in diverse research fields, including social system analysis \cite{Riihimaki2010GPwithMonotonicity}, computer networking \cite{Golchi2015MonotoneEmulation}, econometrics \cite{Cousin2016KrigingFinancial}, geostatistics \cite{maatouk2017gaussian}, nuclear safety criticality assessment \cite{LopezLopera2017FiniteGPlinear}, tree distributions \cite{LopezLopera2019GPCox}, coastal flooding \cite{LopezLopera2019lineqGPNoise}, and nuclear physics \cite{Zhou2019ProtonConstrGPs}. The diversity of these domains highlights the versatility and relevance of constrained GPs. 

In this paper, we adapt the finite-dimensional framework of GPs introduced in~\cite{maatouk2017gaussian,LopezLopera2017FiniteGPlinear} to handle constraints using multi-dimensional ``hat basis'' functions locally supported around knots of a grid. 
In dimension one, the hat basis functions are also known as splines of degree one or $\mathbb{P}_1$ finite element basis functions. Importantly, this framework guarantees that constraints are satisfied everywhere in the input space.

However, even if recent improvements have been done to scale up this approach in the case of equally spaced knots \cite{maatouk2025large}, 
one encounters the curse of dimensionality since the multi-dimensional hat basis functions are built by tensorization of the one-dimensional ones. \cite{Bachoc2022MaxMod} alleviates this issue by introducing the MaxMod algorithm, which performs variable selection and optimized knot allocation. MaxMod has been successfully applied to target functions up to dimension $D=20$ (though with fewer active variables).

\paragraph{Constrained additive GPs.} In the general statistics literature, a common approach to achieve dimensional scalability is to assume additive target functions:
\begin{equation}\label{eq:AdditiveFunction}
    y(x_1,\ldots, x_D) = y_1(x_{1}) + \cdots + y_{D}(x_{D}).
\end{equation}
Although this assumption may lead to overly “rigid” models, it results in simple frameworks that easily scale in high dimensions, as seen in~\cite{hastie2017generalized,buja1989linear} (without inequality constraints). Additive (unconstrained) GPs are considered in \cite{Durrande2012AdditiveGPs,Duvenaud2011AdditiveGPs}, as sums of one-dimensional independent GPs. We note that, besides computational advantages, the additive assumption also yields interpretability, such as the assessment of individual effects of input variables.

In \cite{lopez2022high}, constrained additive GPs are suggested based on the finite-dimensional approximation discussed above, providing a significant scaling to \cite{Bachoc2022MaxMod}, up to hundreds of dimensions. Furthermore, MaxMod has been adapted for variable selection and  knot allocation.

\paragraph{Extension to block-additive GPs (baGPs).}
In this paper, we seek a ``best of both worlds'' trade-off between \cite{Bachoc2022MaxMod}, which is more flexible but does not scale with dimension, and \cite{lopez2022high}, which scales better but cannot handle interactions between variables. 
Thus we suggest a block-additive structure, yielding block-additive GPs (baGPs). More precisely, we consider functions $[0,1]^D\to\R$:
\begin{equation}\label{eq:BlockAdditiveFunction}
    y(x_1,\ldots, x_D)=y_1(\bx_{\blockset_1})+\cdots+y_{B}(\bx_{\blockset_B}).
\end{equation}
Here, $\partition := \{\blockset_1, \ldots, \blockset_B\}$\label{page:subpartition}\label{page:blockset}\label{page:sizeblock} represents a subpartition of $\{1, \ldots, D\}$, where the disjoint union of the sets $\blockset_j$ is a subset of $\{1,\ldots,D\}$. The subset of variables $\bx_{\blockset_j}$ is simply obtained from $\bx$ by keeping the components of indices in $\blockset_j$.

The block-additive model offers flexibility in choosing the partition $\partition$, thereby encompassing both additive functions and functions with interactions at any order among all input variables. Its practical utility is especially relevant when the sizes of the blocks $|\blockset_j|$, equivalently the interaction orders, remain relatively small. In our constrained framework, this enhances the tractability of optimization and Monte Carlo sampling needed to compute the constrained GP posterior.

The construction of the finite-dimensional block-additive GP is not straightforward, as it requires to consider new bases and new methods to update them.
In practice, the block structure is unknown, but evaluations of the target function $y$ are available. A new challenge for this model is to infer the block-additive structure of $y$. Therefore, we propose a data-driven approach to select the blocks by providing an extension of the MaxMod algorithm. It is worth noting that outside of the GP world, the setting of block-additive models and methods for selecting blocks have been studied in the statistics literature, see for instance~\cite{storlie2011surface,stone1985additive,wood2017generalized}.
In particular, the ACOSSO method in \cite{storlie2011surface} is closest to GPs as it relies on reproducing kernel Hilbert spaces (RKHSs), but does not handle inequality constraints. 
Our extension of MaxMod is the first block selection method tailored to constrained GPs, to the best of our knowledge.

\paragraph{Summary of contributions.}  
In this paper, we consider a general target function $y$, known to belong to a convex set.
We focus on the convex set of componentwise monotonic (e.g. non-decreasing) functions. Nevertheless, as discussed in Remark \ref{remark:other_constr}, our framework can handle other convex sets for constraints such as componentwise convexity.

We make the following contributions.

\textbf{1)} We introduce a comprehensive framework for handling baGPs and constrained baGPs. Theoretical results are derived for multi-dimensional hat basis functions. In particular, we explicitly provide the change-of-basis matrices corresponding to adding active variables, merging blocks or adding knots.
We also use the matrix inversion lemma \cite[][Appendix A.2]{Rasmussen2005GP} to reduce the computational complexity.

\textbf{2)} We extend {MaxMod} to the block-additive setting, as discussed above. This algorithm maximizes a criterion based on the modification of the maximum a posteriori (MAP) predictor between consecutive iterations, hence its name MaxMod. For computational efficiency, we derive an explicit expression of the MaxMod criterion.


\textbf{3)} We provide predictors for every block-function $y_i$ in \eqref{eq:BlockAdditiveFunction} up to an additive constant (see Remark \ref{rem:blockdecomp}). The benefit for interpretability is highlighted on 
a real-world 5D coastal flooding problem previously studied \cite{azzimonti2019profile,LopezLopera2019lineqGPNoise,Bachoc2022MaxMod}.

\textbf{4)} We demonstrate the scalability and performance of our methodology on numerical examples up to dimension 120. Our results confirm that MaxMod identifies the most influential input variables, making it efficient for dimension reduction while ensuring accurate models that satisfy the constraints everywhere on the input space. In the coastal flooding application, compared to \cite{Bachoc2022MaxMod}, our approach achieves higher accuracy with  fewer knots. 

\textbf{5)} We provide open-source codes that are integrated into the open-source R library \texttt{lineqGPR}~\cite{LineqGPR}.

\paragraph{Structure of the paper.} Section \ref{sec:Construction of the predictor} details the construction of the finite-dimensional baGPs. 
Section~\ref{sec:Conditioning a finite-dimensional baGP} explains how to handle the conditioning of a baGP to the inequality constraints and the observations.
Section \ref{sec:MaxMod algorithm} introduces the MaxMod algorithm. Section \ref{sec:Numerical results} presents the numerical results on toy examples and the 5D coastal flooding application. Finally, Section \ref{sec:conclusions} summarizes the conclusions and potential future work. The proofs and additional content are provided in the Appendix.

\section{baGPs and their finite-dimensional approximations} \label{sec:Construction of the predictor}

In this section we consider a fixed subpartition $\subpartition=\{\blockset_1,\ldots,\blockset_B\}$ of $\{1,\ldots,D\}$ and we delve into the construction of the finite-dimensional baGP predictor.
This construction relies on two steps. Firstly, we introduce an infinite-dimensional baGP that is refereed to as $\BAGP$. Secondly, for each block we construct a family of hat basis functions. The finite-dimensional GP is then obtained by projection of $\BAGP$ onto the vector space spanned by them. Table \ref{table:list:of:symbols} provides the list of the main notation symbols for Sections \ref{sec:Construction of the predictor} and \ref{sec:Conditioning a finite-dimensional baGP}.  

\subsection{Block-additive GPs}
\label{subsec:Unconstrained baGPs}

For each $1\leq j\leq \sizeblock$, we consider a centered GP $\{\GPj(\bx), \, \bx \in [0,1]^{|\blockset_j|}\}$\label{page:GPj} with kernel $\kernlj$\label{page:kernlj}. 
We then define the baGP $\BAGP$ as 
\begin{equation}\label{eq:BAGP}
\BAGP(\bx)= Y_1(\bx_{\blockset_1})+ \cdots + Y_{\sizeblock}(\bx_{\blockset_\sizeblock}).
\end{equation}
Assuming that
$(\GPj)_{1\leq j\leq \sizeblock}$ are independent, then
$\BAGP$ \label{page:BAGP} is also a centered GP with kernel $\kernlBAGP:\domainD\times \domainD\to \R$\label{page:kernlBAGP} satisfying
\begin{equation}\label{eq:kernlBAGP}
    \kernlBAGP(\bx,\bx')=\sum_{j=1}^{B} \kernlj (\bx_{\blockset_j},\bx_{\blockset_j}').
\end{equation}
An example of kernel $k_j$ is
\begin{equation}\label{eq:kernel}
	k_j(\bx_{\blockset_j},\bx_{\blockset_j}')=\sigma^2_j \prod_{i\in \blockset_j} \kernmat_{\theta_{i}}(x_i,x'_i),
\end{equation}
where $\bx_{\blockset_j}=(x_i)_{i \in \blockset_j}$, $\bx'_{\blockset_j}=(x'_i)_{i \in \blockset_j}$, and for all $\theta \in \mathbb{R}^+$,  $\kernmat_{\theta}$  is the one-dimensional Mat\'ern correlation kernel:
   \vspace{-0cm}
\[
    \kernmat_\theta(x,x')= \left(1+\sqrt{5}\frac{|x-x'|}{\theta}+ \frac{5}{3} \frac{|x-x'|^2}{\theta^2}\right)\exp\left(-\sqrt{5} \frac{|x-x'|}{\theta}\right).
       \vspace{-0cm}
\]
For this particular kernel structure, each block has one variance parameter $\sigma_j^2 \in \mathbb{R}^+$ and one length-scale parameter per dimension, denoted $\theta_{i} \in \mathbb{R}^+$. This involves at most of $B+D$ covariance parameters.

Each $\GPj$ is a Gaussian prior over the function $y_j$ defined in \eqref{eq:BlockAdditiveFunction}, then $\BAGP$ is the Gaussian prior over the latent function $y = y_1 \oplus \cdots \oplus  y_{B}$. Note that handling the functional constraint $\BAGP \in \convex$ is the strongest challenge of constrained GPs. 
To make this possible, we approximate  $\BAGP$ by a finite-dimensional GP, enabling to characterize the (functional) constraints by equivalent finite-dimensional ones. 

\subsection{Hat basis functions and monotonicity constraints}\label{subsec:need:of:hat:basis}

In Section \ref{subsec:finite:dim:approx},
we approximate a GP by a finite-dimensional one living in the vector space $E$ spanned by hat basis functions. The use of these functions has been developed in several articles \cite{bay2017new,Bachoc2022MaxMod,lopez2022high}.
Figure \ref{fig:hat_basis} shows an example of a one-dimensional hat basis $\{\phi_1, \ldots, \phi_5\}$ and the projection of a monotonic function on its corresponding vector space $E$.
In $E$ we have an equivalence between monotonicity of a function and its values at the knots. Basically, a piecewise affine function is non-decreasing if and only if the sequence of values at the knots is non-decreasing. 
\begin{figure}[t!]
\centering
\begin{tabular}{cc}
\includegraphics[width=0.48\linewidth]{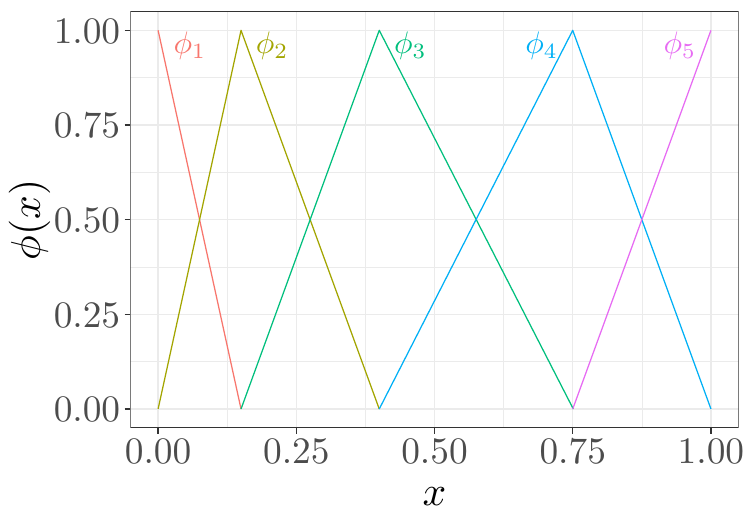} &
\includegraphics[width=0.48\linewidth]{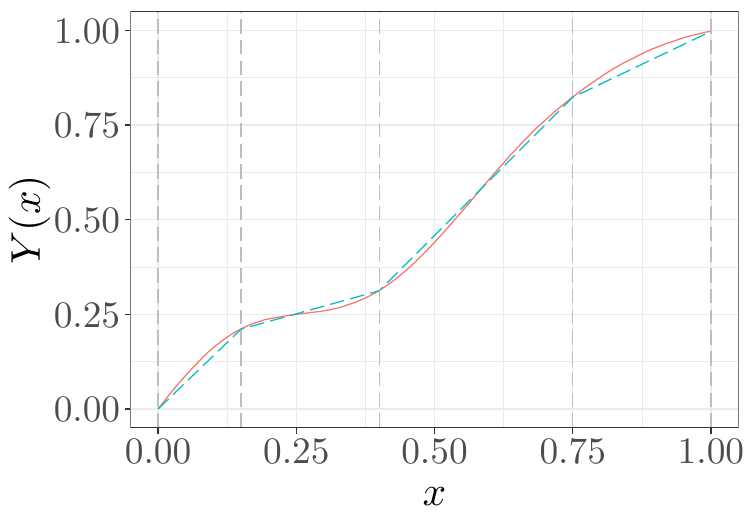} 
\end{tabular}
\caption{The panels show an example of (left) a one-dimensional hat basis generated from the subdivision $s=(0, 0.1, 0.2, 0.5, 0.85, 1)$, and (right) an example of the projection (in blue) of a monotonic function (in red) onto the corresponding vector space.}
\label{fig:hat_basis}
\end{figure}
In other words, for any function $y = \sum_{i=1}^m a_i \phi_i$, we have the following equivalence:
\begin{equation}\label{eq:monot:equiv}
	y \mbox{ is monotonic} \iff a_i\leq a_{i+1},\, \forall i\in \{1,\ldots, m-1\}.
\end{equation}
We then transformed a functional constraint into a linear constraint in a finite-dimensional space.

\subsection{Finite-dimensional approximation} \label{subsec:finite:dim:approx}
We first define the hat basis functions, starting from the one-dimensional case. Then, we define the corresponding finite-dimensional approximation of $\BAGP$, obtained by projection.

\subsubsection{One-dimensional hat basis functions}
\label{subsubsec:1DimensionalBasis}

The one-dimensional hat basis functions
are defined from a subdivision of $[0,1]$. Let $s$ be this subdivision, $s=(t_1,\ldots,t_m)$ \label{page:subdivision} with $t_1=0< \dots <t_m=1$. We call $t_1, \ldots, t_m$ one-dimensional knots and $m$ the size of the subdivision. We write $\widehat{\phi}_{u,v,w}:[0,1]\to \R$\label{page:hatfun}, for $ u < v < w$, the hat function with support $[u,w]$ having two linear components on $[u,v]$ and $[v,w]$, and equal to $1$ at $v$. The function is defined as 
\begin{equation*}
\label{eq:hatfun}
\widehat{\phi}_{u,v,w}(x)=
\begin{cases}
\dfrac{x-u}{v-u} & \mbox{if } u\leq x\leq v,\\
\dfrac{w-x}{w-v} & \mbox{if } v\leq x \leq w,\\
0 & \mbox{otherwise}.
\end{cases}
\end{equation*}
The basis created by the subdivision $s$ is then 

\begin{equation}\label{eq:1-dimHatBasis}
\basis_s=\{\phi^s_1, \ldots, \phi^s_m\}, \hspace{1cm} \phi^s_i:=\widehat{\phi}_{t_{i-1},t_{i},t_{i+1}},\, 1\leq i\leq m,
\end{equation} 
setting $t_{0}=-1$ and $t_{m+1}=2$ by convention. Note that for every $u<v<w$, $\widehat{\phi}_{u,v,w}$ can be seen as a function from $[0,1]$ just by considering its restriction to the segment.

\subsubsection{Multi-dimensional hat basis functions}\label{subsubsec:MultiDimensionalBasis}

Denote the set $X=X^{(1)}\times \cdots \times X^{(D)}$ with $X^{(i)}=[0,1]$. For $i=1,\ldots,D$,  let $s^{(i)}=(t^{(i)}_1,\ldots, t^{(i)}_{m^{(i)}})$ be a subdivision of $[0,1]$ and define the set of all subdivisions, $\subdivset=(s^{(1)},\ldots, s^{(D)})$\label{page:subdivisions}. From \eqref{eq:1-dimHatBasis}, each subdivision $s^{(i)}$ generates a hat basis $\basisSi:=\{\hatfunSik,\, k= 1, \ldots, m^{(i)}\}$\label{page:basisSi}\label{page:hatfunSik}. For each $j \in \{1,\ldots, \sizeblock\}$, consider the block $\blockset_j$. We can define multi-dimensional functions obtained by tensorizing one-dimensional hat bases $\basisSi$. We introduce the set of multi-indices \label{page:multiset_j}
\begin{equation}\label{eq:multiset_j}
	\setLSBj =\prod_{i\in \blockset_j} \left\{1,\ldots,m^{(i)}\right\}=\left\{\ul_j=(\ul_{j,i})_{i \in \blockset_j}, \, 1 \leq \ul_{j,i} \leq m^{(i)}, \, \forall i \in \blockset_j \right\}.
\end{equation}
For every element $\ul_j$\label{page:ulj} in $\setLSBj$ corresponds a multidimensional hat-function $\phi_{\ul_j}:X\to \R$,\label{page:hatfunlj}
\begin{equation}
\label{eq:phi:SBj}
\phi_{\ul_j}(\bx)=\prod_{i \in \blockset_j} \phi^{s^{(i)}}_{\ul_{j,i}}(x_i).
\end{equation}
Note that for a fixed $j$, the functions $\phi_{\ul_j}$ essentially depend on the variables $(x_i)_{i\in \blockset_j}$. We denote by $\continuous(X^{\blockset_j}, \R)$\label{page:continuous:blocksetj} the set of continuous functions depending only on variables indexed by $\blockset_j$. Then, the following inclusions hold
\begin{equation}\label{eq:inclusion:function:space}
   \{\phi_{\ul_j},\, \ul_j \in \setLSBj\}\subset \continuous(X^{\blockset_j}, \R) \subset \continuous(X, \R).
\end{equation}
For a hat function $\phi_{\ul_j}$, we consider the point such that $\phi_{\ul_j}(\bt_{\ul_j})=1$ corresponding to the top of the hat,
\begin{equation}
\label{eq:t}
\bt_{\ul_j}=\left(t^{(i)}_{\ul_{j,i}}\right)_{i \in \blockset_j}.
\end{equation}
Finally, we define the vector space $\spaceESP$\label{page:spaceESP} and the multiset index $\setL^{\subdivset}_{\subpartition}$ as
\begin{equation}\label{eq:spaceESP:setLSP}
	\spaceESP = 
	\operatorname{span}\left(\phi_{\ul_j}\right)_{\ul_j \in \setLSBj, 1\leq j \leq\sizeblock},
	\hspace{1cm} 
	\setLSP =
	\bigcup_{j=1}^{B} \setLSBj.
\end{equation}

\subsection{Projection and finite-dimensional GPs}
\label{subsec:Projection Finite Dimension GP}

Given a subpartition $\subpartition=\{\blockset_1, \cdots, \blockset_\sizeblock\}$ and subdivisions $\subdivset$, a projection $\projSP$\label{page:projSP} over the space $\spaceESP$ can be defined as
\begin{equation}\label{eq:ProjectionFunction}
\begin{array}{cccc}
\projSP:& \continuous(X^{\blockset_1},\R) + \cdots +\continuous(X^{\blockset_\sizeblock},\R) &\to & \spaceESP \\
& \sum_{j=1}^{\sizeblock} f_j & \mapsto & \sum_{j=1}^{\sizeblock}\sum_{\ul_j \in \setL^{\subdivset}_{\blockset_j}} f_j(t_{\ul_j})\phi_{\ul_j}
\end{array}.
\end{equation}
In the above equation the sets $\continuous(X^{\blockset_j}, \R)$ are the ones defined in~\eqref{eq:inclusion:function:space}. Recall that $\BAGP$ is the block-additive GP defined in \eqref{eq:BAGP}. We define the centered finite-dimensional baGP $\GPSP$\label{page:GPSP} as
\begin{equation}\label{eq:GPSP}
\GPSP(\bx)=\projSP(\BAGP)(\bx) = \sum_{j=1}^{\sizeblock}\sum_{\ul_j \in \setL^{\subdivset}_{\blockset_j}} \GP_j(t_{\ul_j})\phi_{\ul_j}(\bx).  
\end{equation}
Its kernel $\kernlSP$\label{page:kernlSP} is then given by
\begin{equation}\label{eq:kernlGPSP}
    \kernlSP(\bx,\bx')=\sum_{j=1}^{B}\sum_{\ul_j, \ul_j' \in\setL^{\subdivset}_{\blockset_j}} \kernlj(t_{\ul_j},t_{\ul'_j}) \phi_{\ul_j}(\bx)\phi_{\ul'_j}(\bx').
\end{equation}

\begin{remark}\label{rem:blockdecomp}
Notice that the application $\projSP$ in \eqref{eq:ProjectionFunction} is well defined, meaning that $\projSP(f)$ is unique although $f$ can be written in several manners $f= \sum_{j=1}^B f_j$. To see this, assume that $f= \sum_{j=1}^B f_j = \sum_{j=1}^B g_j$ where $f_j, g_j \in \continuous(X^{\blockset_j}, \R)$. Recall that the blocks $\blockset_1, \dots, \blockset_B$ are disjoint. As $f_j, g_j$ depend only on the variables in $\blockset_j$, setting to $0$ all the variables that are not in $\blockset_j$, we can see that $g_j - f_j$ is equal to some constant $u_j = \sum_{j' \neq j} (f_{j'}(0) - g_{j'}(0)) $. Furthermore, as $\sum_{j=1}^B (g_j - f_j) = 0$ we must have $\sum_{j=1}^B u_j = 0$. Now, 
as for any $j$, $\sum_{\ul_j\in \setLSBj} \phi_{\ul_j}=1$, 
\begin{align*}
    \projSP\left(\sum_{j=1}^\sizeblock g_j\right) =
 \sum_{j=1}^{\sizeblock}\sum_{\ul_j \in \setLSBj} (f_j(t_{\ul_j})+u_j)\phi_{\ul_j} 
    =\sum_{j=1}^{\sizeblock}\sum_{\ul_j \in \setLSBj} f_j(t_{\ul_j})\phi_{\ul_j}+\sum_{j=1}^{\sizeblock} u_j
    = \projSP\left(\sum_{j=1}^\sizeblock f_j\right),
\end{align*}
which shows that $\projSP(f)$ is uniquely defined.
\end{remark}

\section{Conditioning a finite-dimensional baGP}
\label{sec:Conditioning a finite-dimensional baGP}

In this section, we assume that are given a subpartition $\subpartition=\{\blockset_1, \ldots, \blockset_\sizeblock\}$, subdivisions $\subdivset=(s^{(1)},\ldots, s^{(D)})$ and the associated finite-dimensional baGP $\GPSP$.

This section is dedicated to find the law of the baGP $\GPSP$ constrained to the (possibly noisy) observations $(\GPSP(\bx_i)+ \epsilon_i= y_i)_{i=1,\ldots,n}$ and the functional constraint $\GPSP\in\convex$. Defining $\bX=[\bx_1, \ldots, \bx_n]^\top$, $\bY=[y_1, \ldots, y_n]^\top$ and $\bepsilon= [\epsilon_1, \ldots,\epsilon_n]^\top$ a Gaussian noise of law $\normal(0, \tau^2\bI_n)$ independent with $\GPSP$, then we aim to study
\[
\left(\GPSP(\bx)\,\big|\,
\GPSP(\bX)+ \bepsilon = \bY,\,
\GPSP \in \convex \right).  
\]
We use above classical notations in GP framework, for which we give a reminder here. Given two sets $A$ and $B$, for any vector $\ba=[a_1,\ldots,\, a_n]^\top \in A^n$, $\bb=[b_1,\dots, b_m]^\top \in B^m$ and any function $f: A\times B \to \R$, the notation  $f(\ba,\bb)$ corresponds to the matrix in $\R^{n\times m}$, $f (\ba,\bb)=(f (a_i,b_j))_{1\leq i\leq n, \\ 1\leq j \leq m}$.

We first show that it is equivalent to work on conditioning a Gaussian vector $\bxi$. 
Then, we condition this vector to the interpolations constraints. Finally, we show the equivalence between the functional constraint of our finite-dimensional baGP, $\GPSP\in \convex$, and a finite-dimensional spatial constraint of our Gaussian vector $\bxi\in\convex'$. This unable to condition by inequality constraints.

We found a more efficient way to compute the law of the conditioned Gaussian vector $\bxi$ to the observations. As the algorithm complexity of our method relies on this computation, we will discuss the complexity improvements of our method.

\subsection{Boiling down to conditioning a Gaussian vector}\label{subsec:conditioning on Gaussian vector}



Given an order on the elements of $\setLSBj$ for $j=1, \ldots, \sizeblock$, we define the multi-dimensional function $\bPhi:=\bPhiSP$ as

\begin{equation}
\label{eq:bPhiSP}
\begin{array}{cccc}
\bPhi: & \domainD & \longrightarrow & \R^{|\setL^{\subdivset}_{\blockset_1}|+ \cdots + |\setL^{\subdivset}_{\blockset_B}|}\\
&\bx & \mapsto & [\bPhi^{\subdivset}_{\blockset_1}(\bx)^\top, \ldots, \bPhi^{\subdivset}_{\blockset_\sizeblock}(\bx)^\top]^\top
\end{array},
\end{equation}
where $\bPhiSBj = (\phi_{\ul_j})_{\ul_j \in \setLSBj}$ is a column vector function. Recall the (infinite-dimensional) baGP defined in \eqref{eq:BAGP} is $Y^\subpartition= \sum_{j=1}^{\sizeblock} \GPj$, with $\GPj \sim \mathrm{GP}(0,\kernlj)$.
From \eqref{eq:GPSP} we can rewrite $\GPSP$ as the scalar product of a Gaussian vector $\bxi$ and the multidimensional function $\bPhi$
\begin{equation}
\label{eq:GPSP:scalar:prod}
\GPSP= \bPhi^\top\bxi, \hspace{1cm} \bxi =(\bxi_1, \ldots, \bxi_B)^\top,
\end{equation} 
\begin{equation}\label{eq:Xij}
\bxi_j=\GPj(\bt_j)
\sim \normal\big(0, \kernlj (\bt_j,\bt_j)\big), 
\hspace{1cm} \bt_j=(t_{\ul_j})_{\ul_j \in \setL^{\subdivset}_{\blockset_j}}.
\end{equation} 
\label{page:Xi}\label{page:Xij}The independence hypothesis between the GPs $\GPj$ implies independence between the vectors $\bxi_j$, hence the covariance matrix $\widetilde{\bK}$ of $\bxi$ is a block-diagonal matrix with blocks $(\kernlj (\bt_j,\bt_j))_{j=1}^{\sizeblock}$. Moreover, the zero-mean hypothesis on $\GP_j$ implies $\bxi \sim \normal(0, \widetilde{\bK})$. Linearity of the conditioning allows us to write
\[
\left(\GPSP(x)\big|\GPSP(\bX)+\bepsilon=\bY, \GPSP\in \convex \right) =\bPhi(\bx)\left(\bxi\big|\GPSP(\bX)+\bepsilon=\bY, \GPSP\in \convex \right),
\]
underlying that we only need to work on the conditioning of the Gaussian vector $\bxi$.

\subsection{Interpolation constraints and computation costs}\label{subsec:Interpolation constraints and computation cost}
The conditional Gaussian vector $(\bxi| \,\bPhi(\bX)^\top \bxi+\bepsilon=\bY)$, where $\bepsilon \sim \normal(0, \tau^2\bI_n)$ is independent of $\bxi$, has mean $\bmu$ with
\begin{equation}
\bmu = \widetilde{\bK}\bPhi(\bX) \Big[\bPhi(\bX)^\top \widetilde{\bK}\bPhi(\bX) + \tau^2 \bI_n\Big]^{-1}\bY.
\label{eq:muCondXi}
\end{equation}
The computation of $\bmu$ has been studied in \cite[][Appendix 2]{lopez2022high} when $n \ll \nknots$ using the Woodbury identity, also known as the matrix inversion lemma (see \cite[][Appendix 3]{Rasmussen2005GP}). A significant speed up is obtained to compute $\big[\bPhi(\bX)^\top \widetilde{\bK}\bPhi(\bX) + \tau^2 \bI_n\big]^{-1}$ in $\complexity(\nknots^3+n\nknots^2)$, compared to $\complexity(n^3+n^2|\setL^{\subdivset}_{\subpartition}|+n|\setL^{\subdivset}_{\subpartition}|^2)$ which is the complexity of the direct computation. The covariance $\condK$ of the conditional Gaussian vector $(\bxi| \,\bPhi(\bX)^\top \bxi+\bepsilon=\bY)$ is
\begin{equation}
\condK = \widetilde{\bK}-\widetilde{\bK}\bPhi(\bX) \Big[\bPhi(\bX)^\top \widetilde{\bK}\bPhi(\bX)+ \tau^2 \bI_n \Big]^{-1} \bPhi(\bX)^\top \widetilde{\bK}.
\label{eq:SigmaCondXi}
\end{equation}
The direct computation of $\condK^{-1}$, required in the MAP estimation detailed in Section~\ref{subsubsec:Convex handled by the algorithm}, has a complexity of $\complexity (\nknots^3+ n^3)$ due to the two matrices inversions. In this paper we use an alternative formula for the computation of $\condK^{-1}$ provided again by the Woodbury identity:
\begin{equation}\label{eq:Woodbury inversion}
    \condK^{-1} = \widetilde{\bK}^{-1}+\tau^{-2}\bPhi(\bX)\bPhi(\bX)^\top.
\end{equation}
The block diagonal structure of $\widetilde{\bK}$ allows the computation of $\condK^{-1}$ in $\complexity\big( \sum_{j=1}^{\sizeblock}|\setL^{\subdivset}_{\blockset_j}|^3+\nknots^2n\big)$. This complexity stems from the inversion of each block of $\widetilde{\bK}$ followed by the computation of $\bPhi(\bX)\bPhi(\bX)^\top$. This improvement in the complexity underlines the fact that block-additive structures allow us to deal with ``independent problems'' in smaller dimension. Therefore, to compute $\condK^{-1}$, it is preferable to use \eqref{eq:Woodbury inversion} instead of \eqref{eq:SigmaCondXi}. Table~\ref{tab:complexity} summarizes some of the computational costs involved in the computation of the conditional finite-dimensional GP when $n \gg \nknots$ and $n \ll \nknots$.

\begin{table}[t!]
    \centering
    \caption{Illustration of the complexity cost of computation in different cases. Notations $*_D$ and $*_W$ hold respectively for \textbf{Direct} or the \textbf{Woodbury} computation method.}
    \label{tab:complexity} 
    \setlength{\tabcolsep}{3.8pt}
    {\small
    \begin{tabular}{c|cccccc}
        \toprule
        & \multicolumn{6}{c}{Complexity computation} \\
        & $\bmu_{D}$ & $\bmu_{W}$ & $\condK^{-1}_{D}$ & $\condK^{-1}_{W}$ & $(\bmu_{D}, \condK^{-1}_{W})$ & $(\bmu_{W}, \condK^{-1}_{W})$\\
        \midrule
        $n \gg \nknots$ & 
        $\complexity(n^3)$ &
        $\complexity(n^2 \nknots)$ &
        $\complexity(n^3)$ &
        $\complexity(n^2\nknots)$ & $\complexity(n^3)$ &
        $\complexity(n^2\nknots)$  \\
        
        $n \ll \nknots$ &
        $\complexity(n\nknots^2)$  & 
        $\complexity(\nknots^3)$ &  $\complexity(\nknots^3)$ & $\complexity(n\nknots^2+\sum |\setLSBj|^3 )$ & 
        $\complexity(n\nknots^2+\sum |\setLSBj|^3 )$ &
        $\complexity(\nknots^3)$ \\
        \bottomrule
    \end{tabular}
    }
\end{table}


\begin{remark}
In \cite{lopez2022high}, authors deal with the additive model corresponding of blocks of size $1$. As they use the direct computation of $\condK^{-1}$, it leads to a complexity of $\nknots^3$ when $n\ll \nknots$. From what we show in table~\ref{tab:complexity} and what we said, using \eqref{eq:Woodbury inversion} would lead to a significant improvement on the complexity of the model. If the size of the bases are all equal: $|\setLSBj|=|\setLSP|/\sizeblock$. The complexity with the Woodbury formula in \eqref{eq:Woodbury inversion} is $\complexity(|\setLSBj|^3/\sizeblock^2)$ instead of $\complexity(|\setLSBj|^3)$, which is particularly interesting in high dimension.
\end{remark}




\subsection{Verifying inequality constraints everywhere with finite-dimensional baGPs} \label{subsubsec:Convex handled by the algorithm}

Recall that the set $\convex$ of (componentwise) monotonic functions is a subset of $\continuous([0,1]^D,\R)$. Note that, even if a constrained GP model has a subset $J$ of active variables that is strictly smaller than $D$, it can be considered as a process of the full $D$ variables, and considered as such, it is required to belong to $\convex$.

For a given subpartition $\subpartition$ and subdivisions $\subdivset$, the method to construct the predictor $\predictor$ is to take the mode of the finite-dimensional GP $\GPSP = \bPhi^\top \bxi$, conditioned by the observations $\GPSP(\bX) + \bepsilon= \bY $ and the condition $\GPSP \in \convex$. Here $\bPhi$ is the multi-dimensional function defined in~\eqref{eq:bPhiSP}. The hat basis family, contrary to other spline families, allows to find a convex subset $\convex'\subset \R^{|\setLSP|}$ such that the following equivalence holds 
\begin{equation}
\label{eq:EquivalenceCCp}
\GPSP \in \convex \Longleftrightarrow \bxi \in \convex'. 
\end{equation}
Details of the characterization of $\convex'$ are given below.
Hence, finding the mode of the truncated Gaussian vector $(\bxi|\, \GPSP(\bX) + \bepsilon = \bY, \, \GPSP \in \convex)$, is equivalent to solve the minimization problem
\begin{equation}
\label{eq:xi:mod:sol}
    \widehat{\bxi}=\underset{\bxi \in \convex'}{\arg\min}\, (\bxi-\condmu)^\top \condK^{-1}(\bxi-\condmu).
\end{equation}
In~\eqref{eq:xi:mod:sol}, $\condmu$ and $\condK$ are the mean and the covariance matrix of the Gaussian vector $(\bxi | \, \GPSP(\bX) + \bepsilon = \bY)$ detailed in \eqref{eq:muCondXi} and \eqref{eq:SigmaCondXi}. The mode predictor\label{page:predictor} is then
\begin{equation}\label{eq:predictor}
\predictor = \bPhi^\top \widehat{\bxi}.
\end{equation}
From \eqref{eq:xi:mod:sol}, we see that the difficulty of finding the solution depends on the difficulty of handling a quadratic minimization problem on the set $\convex'$. 

Now, let us consider the  set $\convex$ of monotonic functions, as in the rest of the paper. Then, $\convex'$ can be made explicit.
%
Let us first explain the case $D=1$, which is simplest to expose. 
Let $s = (t_1, \ldots, t_m)$ and $\basis_s =\{\phi^s_1, \ldots, \phi^s_m\}$ be the subdivision and its associated hat basis defined in~\eqref{eq:1-dimHatBasis}. For any function $f$ written as a linear combination of elements in $\basis_s$, $f = \sum_{i=1}^m a_i \phi^s_i$, we have the following equivalence developed in Section \ref{subsec:need:of:hat:basis}:
$$f \mbox{ is monotonic if and only if, for any } 1\leq i\leq m-1, \ a_i\leq a_{i+1}.$$
These inequalities can be rewritten as linear inequalities. Letting $\ba=[a_1,\ldots, a_m]^\top$, then there is a matrix $\bLambda \in M_{m-1,m}$ such that $f$ is monotonic if and only if $\bLambda \ba \leq 0$. 
The case of a general value of $D$ shares some ideas with the case $D=1$, but the explicit linear inequalities are more cumbersome to express.
Note that since we consider non-overlapping blocks, a block-additive function is monotonic if and only if all the individual block functions are monotonic. Then for a function of the form 
\[
\sum_{j=1}^{\sizeblock}\sum_{\ul_j \in \setL^{\subdivset}_{\blockset_j}} a^j_{\ul_j} \phi_{\ul_j}
\]
as in \eqref{eq:ProjectionFunction}, all the functions $\sum_{\ul_j \in \setL^{\subdivset}_{\blockset_j}} \alpha^j_{\ul_j} \phi_{\ul_j}$ must be monotonic. Hence, the set of linear inequalities defining $\convex'$ is of the form $\bLambda \ba \leq 0$, where $\bLambda$ is block diagonal composed of the $\sizeblock$ blocks  $\bLambda_1,\ldots,\bLambda_\sizeblock$ and $\ba$ concatenating the $a^j_{\ul_j}$'s is written as $[\ba_1^\top,\ldots,\ba_\sizeblock^\top]^\top$, with the same dimensions. The expressions of the $\bLambda_i$'s are given in the supplementary material of \cite{Bachoc2022MaxMod} (Section SM1). 


\begin{remark}
    \label{remark:other_constr}
    In this paper, we only focus on monotonic functions but in all generality the optimization problem we are able to solve, as in \eqref{eq:xi:mod:sol}, are ones on polyhedra, that are the sets defined by $\ba\in \convex'$ if and only if $\bLambda \ba \leq \bx$, a topic further explored in \cite{LopezLopera2019lineqGPNoise, maatouk2017gaussian}.
    Similar equivalences as in \eqref{eq:EquivalenceCCp} can be obtained when $\convex$ is the set of componentwise convex functions, see \cite{Bachoc2022MaxMod} (Section SM1).
    Extending this equivalence to other sets of functions $\convex$  is an open problem.
\end{remark}

\section{Sequential construction of constrained baGPs via MaxMod} \label{sec:MaxMod algorithm}

In the previous section, we built the predictor $\widehat{Y}^{S}_{\subpartition}$ defined in \eqref{eq:predictor}. This construction depends on the subdivisions $\subdivset=(s^{(1)},\ldots, s^{(D)})$, the subpartition $\subpartition$, and the convex set $\convex'$. Additionally, as discussed in Section \ref{subsec:Interpolation constraints and computation cost}, the computational cost of the predictor increases with the total number of basis function $\nknots$. This section provides an iterative methodology for optimally selecting the subpartition $\subpartition$ and the subdivisions $\subdivset$.



The idea is to sequentially update, in a forward way, $\subpartition$ and $\subdivset$. To this purpose, we provide different choices to enrich $\subpartition$ and $\subdivset$ at each step of the sequential procedure:  activating a variable, refining an existing variable, merging two blocks. 



\subsection{Possible choices to update subpartition and the subdivisions}\label{subsec:choicesMaxMod}

To formalize the procedure, let us
write $\subdivset=(s^{(1)},\ldots, s^{(\sizeblock)})$ and  $\partition=\{\blockset_1,\ldots, \blockset_{\sizeblock}\}$.
Define the updated values of $\subpartition$ and $\subdivset$ after one of these three choices as $\Mstar = (\subdivset^\star, \subpartition^\star)$ with    $\subdivset^\star=(s^{\star(1)},\ldots, s^{\star(D)})$ and $\subpartition^\star=\{\blockset^\star_1, \ldots, \blockset^\star_\sizeblock\}$.
\begin{itemize} 

\item \textbf{ACTIVATE.} Activating a variable $i$ (for which $s^{(i)}=\emptyset$). Define $s^{\star(i)}:=(0,1)$,
$s^{\star(j)} = s^{(j)}$ for $j \neq i$,
and $\subpartition^\star :=\subpartition\cup \{i\}$.

\item \textbf{REFINE.} Refining an existing variable $i$ by adding a (one-dimensional) knot $t \in [0,1]$. We define 
\[
\subdivset^\star
:= (s^{(1)},\ldots, s^{(i-1)},\operatorname{ord}(s^{(i)}\cup t), s^{(i+1)},\ldots, s^{(D)}).
\]
Here, $\operatorname{ord}(\cdot)$ is an operator that sorts the knots in an increasing order. Assuming that $s^{(i)}_k<t<s^{(i)}_{k+1}$, then
$\operatorname{ord}(s^{(i)}\cup t)=(s^{(i)}_1, \ldots, s^{(i)}_k, t,s^{(i)}_{k+1}, \ldots, s^{(i)}_{m_i}).$ 
\item \textbf{MERGE.} Merging two blocks $\blockset_a$ and $\blockset_b$. We let $\subdivset^\star:=\subdivset$ and $\subpartition^\star:=
\left\{
\subpartition \backslash \{\blockset_a,  \blockset_b\}
,
\blockset_a \cup  \blockset_b
\right\}$.

\end{itemize}
These options define a set 
\begin{equation*} \label{eq:Mstar}
\Mstar(\subdivset,\subpartition)=\{ (\subdivset^\star,\subpartition^\star)  \text{ that can be obtained from the three choices above starting from $(\subdivset,\subpartition)$}\}.
\end{equation*}
Now we define the MaxMod criterion in order to select a couple $(\subdivset^\star,\subpartition^\star)$ in $\Mstar(\subdivset,\subpartition)$.

\subsection{Construction of the MaxMod criterion}
\label{subsec:MaxModcriterion}

The MaxMod criterion combines two different subcriteria. The first one is the $\mathbf{L^2}$\textbf{-Modification (L2Mod)} criterion, defined between two estimators constructed from different subdivisions and subpartitions. This criterion has been used in the previous versions of MaxMod for dealing with non-additive and additive constrained GPs \cite{Bachoc2022MaxMod, lopez2022high}:
\begin{equation}\label{eq:L2criterion}
\Lcritfull =
\left\|\widehat{Y}^{\subdivset^\star}_{\subpartition^\star}-\widehat{Y}^\subdivset_{\subpartition}\right\|^2_{L^2} = \int_{[0,1]^D}\left(
\widehat{Y}^{\subdivset^\star}_{\subpartition^\star}(x)-\widehat{Y}^\subdivset_{\subpartition}(x)
\right)^2\, dx.
\end{equation}
Above,
$\widehat{Y}^\subdivset_{\subpartition}$ and $\widehat{Y}^{\subdivset^\star}_{\subpartition^\star}$ are the predictors constructed in~\eqref{eq:predictor}.
This criterion can be computed efficiently thanks to the following proposition  (see Appendix \ref{appendix:squareNormCriteria} for the proof).

\begin{proposition}[Closed form for the L2Mod criterion]\label{prop:algebraic expression}
    Let $\widehat{Y}^{\subdivset^\star}_{\subpartition^\star}$ and $\widehat{Y}^\subdivset_{\subpartition}$ be the two predictors defined in \eqref{eq:predictor}. Let $\setL^{\subdivset}_{\subpartition}$ and $\setL^{\subdivset^\star}_{\subpartition^\star}$ be the corresponding multi-indices sets defined in \eqref{eq:spaceESP:setLSP}. Then, with the vectors $\beeta \in \R^{|\setL^{\subdivset^\star}_{\subpartition^\star}|}$ of~\eqref{eq:eta vector}, $\bE \in\R^{|\setL^{\subdivset^\star}_{\subpartition^\star}|}$ of~\eqref{eq:E_ul} and the matrix $\bPsi \in M_{|\setL^{\star}|}(\R)$ defined in~\eqref{eq:Psi matrix}, we have the explicit expression:
    \begin{equation}
        \Lcritfull =
        \beeta^\top \bPsi \beeta + (\beeta^\top \bE)^2- \sum_{1\leq j \leq B}  \left(\beeta_j^\top \bE_j\right)^2.
    \end{equation}
    Furthermore, the matrix $\bPsi$ is sparse and the computational cost of $\Lcritfull$ is linear with respect to $|\setL^{\subdivset^\star}_{\subpartition^\star}|$.
\end{proposition}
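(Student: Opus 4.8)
The plan is to express the difference of the two predictors as a single block-additive function living in the refined space $E^{\subdivset^\star}_{\subpartition^\star}$, then to expand the $L^2$ norm and exploit two structural facts: the multi-dimensional hat functions factorize as products of one-dimensional ones, and distinct blocks $\blockset^\star_j$ depend on disjoint sets of variables, so that integrals over $\domainD$ of cross-block products split into products of one-dimensional integrals.

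First I would use the change-of-basis matrices provided in Section~\ref{sec:Construction of the predictor} (for the three moves of Section~\ref{subsec:choicesMaxMod}) to rewrite the coarser predictor $\widehat{Y}^\subdivset_{\subpartition}$ as an element of the refined space $E^{\subdivset^\star}_{\subpartition^\star}$. Subtracting the coefficient vectors gives a single vector $\beeta\in\R^{|\setL^{\subdivset^\star}_{\subpartition^\star}|}$, as in \eqref{eq:eta vector}, organized blockwise as $\beeta=(\beeta_1,\ldots,\beeta_B)$, so that
\[
\widehat{Y}^{\subdivset^\star}_{\subpartition^\star} - \widehat{Y}^\subdivset_{\subpartition} = \sum_{j=1}^B g_j, \qquad g_j := \beeta_j^\top \bPhi^{\subdivset^\star}_{\blockset^\star_j} = \sum_{\ul_j \in \setLSBjstar} (\beeta_j)_{\ul_j}\, \phi_{\ul_j},
\]
where by \eqref{eq:inclusion:function:space} each $g_j$ depends only on the variables indexed by $\blockset^\star_j$.

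Next I would expand the squared norm into a double sum over blocks,
\[
\Lcritfull = \int_{\domainD}\Big(\sum_{j=1}^B g_j\Big)^2 dx = \sum_{j=1}^B \int_{\domainD} g_j^2\, dx + \sum_{j\neq j'} \int_{\domainD} g_j\, g_{j'}\, dx,
\]
and treat the two contributions separately. For the diagonal terms, $\int_{\domainD} g_j^2\,dx = \beeta_j^\top G_j \beeta_j$ with $(G_j)_{\ul_j,\ul'_j}=\int_{\domainD}\phi_{\ul_j}\phi_{\ul'_j}\,dx$; since $\phi_{\ul_j}=\prod_{i\in\blockset^\star_j}\phi^{s^{(i)}}_{\ul_{j,i}}$ and the measure on $\domainD$ is a product of unit measures, this entry factorizes as $\prod_{i\in\blockset^\star_j}\int_0^1 \phi^{s^{(i)}}_{\ul_{j,i}}\phi^{s^{(i)}}_{\ul'_{j,i}}\,dx_i$, the remaining one-dimensional integrals over coordinates outside the block being equal to $1$. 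Collecting the $G_j$ into the block-diagonal matrix $\bPsi=\operatorname{diag}(G_1,\ldots,G_B)$ of \eqref{eq:Psi matrix} gives $\sum_j \beeta_j^\top G_j \beeta_j=\beeta^\top\bPsi\beeta$. Because a one-dimensional hat function overlaps only its two neighbours, the one-dimensional Gram matrices are tridiagonal, so each $G_j$ has a bounded number of non-zero entries per row (the block sizes being treated as bounded), which yields both the sparsity of $\bPsi$ and the announced linear cost once combined with the dot products below. For the off-diagonal terms, the key point is that for $j\neq j'$ the blocks are disjoint, so $g_j$ and $g_{j'}$ depend on disjoint variables and the integral factorizes, again using that the leftover one-dimensional integrals equal $1$:
\[
\int_{\domainD} g_j\, g_{j'}\, dx = \Big(\int_{\domainD} g_j\, dx\Big)\Big(\int_{\domainD} g_{j'}\, dx\Big) = (\beeta_j^\top \bE_j)(\beeta_{j'}^\top \bE_{j'}),
\]
with $(\bE_j)_{\ul_j}=\int_{\domainD}\phi_{\ul_j}\,dx$ as in \eqref{eq:E_ul} and $\bE=(\bE_1,\ldots,\bE_B)$. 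Writing the off-diagonal sum as a full square minus its diagonal, $\sum_{j\neq j'}(\beeta_j^\top\bE_j)(\beeta_{j'}^\top\bE_{j'}) = (\beeta^\top\bE)^2 - \sum_j (\beeta_j^\top\bE_j)^2$, and adding back the diagonal contribution $\beeta^\top\bPsi\beeta$ produces exactly the claimed identity.

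I expect the main obstacle to be the first step, namely constructing the common vector $\beeta$ so that the difference is expressed block-consistently in the refined basis for all three moves; the delicate case is \textbf{MERGE}, where the partition itself changes, so one must verify that $\widehat{Y}^\subdivset_{\subpartition}$ genuinely lies in $E^{\subdivset^\star}_{\subpartition^\star}$ and assign each coefficient to the correct merged block index $j$. Once $\beeta$, $\bE$ and $\bPsi$ are in place, the remaining work is the elementary product-measure factorization above; the only bookkeeping to watch is that the true same-block integral $\int g_j^2$ (carried by $\bPsi$) differs from the mean term $(\beeta_j^\top\bE_j)^2=(\int g_j)^2$, which is why the latter must be removed from the global square $(\beeta^\top\bE)^2$.
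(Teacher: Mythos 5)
Your proposal is correct and follows essentially the same route as the paper: express the coarse predictor in the refined basis via the change-of-basis lemma (where the MERGE case you flag is handled by the hat functions summing to one within a block), split the squared $L^2$ norm into within-block Gram terms giving $\beeta^\top\bPsi\beeta$ and cross-block terms that factorize by disjointness into $(\beeta^\top\bE)^2-\sum_j(\beeta_j^\top\bE_j)^2$, with sparsity coming from the tridiagonal one-dimensional Gram matrices. No substantive differences from the paper's argument.
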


Unlike the previous implementations of \cite{lopez2022high,Bachoc2022MaxMod}, which only quantify the difference between the two predictors, we aim to also account for improvements in prediction errors. Therefore, we measure the \textbf{Squared Error (SE)} criterion:
\begin{equation}\label{eq:SEcriterion}
\SEcrit(\subdivset^\star,\subpartition^\star) = \left\|\predictorstar(\bX)- \bY \right\|^2.
\end{equation}
Hence, we define the final selection criterion $\MaxModCrit$ of the MaxMod procedure as a combination of the two previous criteria:
\begin{equation}\label{eq:MaxModCriterion}
\MaxModCrit(\subdivset^\star,\subpartition^\star) = \frac{\Lcrit((\subdivset,\subpartition),(\subdivset^\star,\subpartition^\star))}{ (|\setLSPstar|-|\setLSP|)^{\alpha} \SEcrit(\subdivset^\star,\subpartition^\star)^{\gamma}}.
\end{equation}
Note that we also account for the difference of the bases sizes $|\setLSPstar|-|\setLSP|$, as our aim is to keep the dimension of the active space $E^\subdivset_\subpartition$ relatively low to have efficient computation over the predictors. The coefficients $\alpha >0, \gamma>0$ give flexibility to the MaxMod procedure. Large values of $\alpha$ lead to stronger penalties for merging blocks. 
Larger values of $\gamma$ increase the importance of the $\SEcrit$ criterion.
We tried our method with $\alpha := (1,1.2,1.4)$ and $\gamma := (1, 0.5)$ over a range of test functions and the best results for recovering the blocks
were obtained with  $\alpha = 1.4$ and $\gamma = 0.5$. 
Thus we fix these values for the rest of the paper. 
Notice that~\eqref{eq:SEcriterion} is less reliable when data are noisy. Moreover, the SE can be very small, even when the predictor is not a good relative approximate, if the values of $\bY$ are themselves concentrated. As a stopping criterion for our algorithm, we consider the SE divided by the empirical variance $\var(\bY)$. This makes the stopping criterion invariant to rescaling of $\bY$. 
Algorithm~\ref{Algo:MaxMod} summarizes the implementation of MaxMod.

\begin{algorithm}[t!]
\caption{MaxMod}
\label{Algo:MaxMod}
\begin{algorithmic}[1]
\REQUIRE Observations $(\bX, \bY)$,  stopping criteria parameters
$\epsilon_1,\epsilon_2 \in (0,1)$, maximal number of iterations $M$
\ENSURE The subdivision $\subdivset$, the partition $\subpartition$ and the predictor $\widehat{Y}^{S}_{\subpartition}$
\STATE $\subdivset=((),\ldots,\,())$,
$\subpartition = \{\}$, $c_1=2\epsilon_1$, $c_2=2\epsilon_2$, $i=0$
\WHILE {$c_1>\epsilon_1$ and $c_2 >\epsilon_2$ and $i \leq M$}
\STATE $(\subdivset^\star,\subpartition^\star)= \arg\max_{(\subdivset',\subpartition') \in \Mstar(\subdivset,\subpartition)} \mathcal{K}((\subdivset,\subpartition),(\subdivset',\subpartition'))$ (see definition in \eqref{eq:MaxModCriterion})
\STATE $c_1=\Lcritfull$
\STATE $c_2=\SEcrit(\subdivset^\star,\subpartition^\star)/\var(\bY)$
\STATE $\subdivset = \subdivset^\star$, $\subpartition = \subpartition^\star$ 	
\STATE $i=i+1$
\ENDWHILE
\STATE Compute $\predictor$ according to 
\eqref{eq:predictor}
\RETURN $(\subdivset,\subpartition, \predictor)$
\end{algorithmic}
\end{algorithm}

\section{Numerical experiments}
\label{sec:Numerical results}

\subsection{General settings}

\paragraph{Numerical implementations.} The implementations of the bacGP framework and MaxMod have been integrated into the R package \texttt{lineqGPR}~\cite{LineqGPR}. Both the source codes and notebooks to reproduce some of the numerical illustrations presented in this section are available in the GitHub repository: \url{https://github.com/anfelopera/lineqGPR}. The experiments here have been executed on a 12th Gen Intel(R) Core(TM) i7-12700H processor with 16 GB of RAM.

To define the bacGP model, we consider tensorized Mat\'ern $5/2$ kernels (see Section~\ref{subsec:Unconstrained baGPs}). 
We denote the set of covariance parameters as $\Theta = ((\sigma_1^2, (\theta_i)_{i\in \blockset_1}), \ldots, (\sigma_B^2, (\theta_{i})_{i \in \blockset_B}))$.
Both $\Theta$ and the noise variance $\tau^2$ are estimated via (multi-start) maximum likelihood (see Appendix~\ref{appendix:Kernel Hyperparameters selection} for a further discussion). 
The noise is required to ``relax'' the interpolation condition when modeling additive functions and to speed-up numerical computations. It also enhances numerical stability by preventing issues during the inversion of the covariance matrix defined in expression~\eqref{eq:Woodbury inversion}.

\paragraph{Training datasets.}  In the synthetic examples, as recommended by~\cite{lopez2022high} for additive constrained GPs, we consider training datasets based on random Latin hypercube designs (LHDs). While using LHDs is not required to perform the bacGP framework nor MaxMod, it is often recommended to promote more accurate predictions when dealing with additive functions~\cite{Stein1987LHS}. For the LHDs, we choose a design size $n = k \times D$, with $D \in \mathbb{N}$ the dimension of the input space, and $k \in \mathbb{N}$ a multiplication factor that can be arbitrarily chosen.
Setting $k < 10$ is often considered reliable when accounting for additional information provided by additive structures or inequality constraints within GP frameworks~\cite{lopez2022high}. In our study, we fix $k = 3$ when focusing on the assessment of predictions. This value is set based on the maximal number of covariance parameters to be estimated, which is $2D + 1$ for an additive process that neglects interactions between variables (worst case). For testing MaxMod's ability to identify the partition $\subpartition$, we manually set $k = 7$, which provides stable inference results.

\paragraph{Performance indicators.} We assess the quality of predictions in terms of the $Q^2$ criterion computed from the Standardized Mean Square Error (SMSE) as
\begin{equation}
Q^2
= 1 - \SMSE(y, \widehat{y})
= 1 - \frac{\sum_{i=1}^{n}(y_i-\widehat{y}_i)^2}{\sum_{i=1}^{n} (y_i-\overline{y})^2},
\label{eq:Q2}
\end{equation}
where $(y_i)$ are the observations, $(\widehat{y}_i)$ are the corresponding predictions, and $\overline{y} = \frac{1}{n} \sum_{i=1}^{n} y_i$ is the empirical mean. The $Q^2$ criterion is equal to $1$ if predictions exactly coincide with observations, and is smaller otherwise. In the synthetic examples, where the target function can be freely evaluated, the $Q^2$ is computed via Monte Carlo using $10^5$ points from a maximin LHD. For the coastal flooding application, it is computed only on the subset of the dataset that is not used for training the models.

In the coastal application, to ensure comparability with previous models tested on the same application, we also consider the bending energy criterion given by
\begin{equation}
    E_n(y, \widehat{y})=\frac{\sum_{i=1}^{n} (y_i -\widehat{y}_i)^2}{\sum_{i=1}^{n} y_i^2}.
    \label{eq:En}
\end{equation}

\subsection{Monotonicity in high dimension}
For testing the bacGP in high dimension, we consider the non-decreasing block-additive target function $y:[0,1]^D\to \R$:
\begin{equation}\label{eq:multidimFun}
y(\bx)=\sum_{j=1}^{D/2} \arctan\left(5 \left[ 1 -\frac{j}{d+1}\right](x_{2j-1}+2x_{2j})\right).
\end{equation}
The structure of $y$ is inspired by the additive functions studied in \cite{Bachoc2022MaxMod,lopez2022high}, but allowing interactions between input variables. More precisely, we consider $D/2$ blocks composed by non-overlapping pairs of input variables. A scale factor that varies with $j \geq 1$ is introduced to control the growth rate of a given block. As observed in~\eqref{eq:multidimFun}, this growth rate decreases as the index $j$ increases.

For different values of $D \geq 10$, we assess baGP models with and without non-decreasing constraints. The focus here is to compare the quality of bacGP predictors with respect to the unconstrained baGP predictor. For the bacGPs, we set 6 knots uniformly distributed over each variable as subdivisions and the partition $\subpartition=\{\{2j-1,2j\}, 1 \leq j \leq D/2 \}$. We denote the MAP estimator in~\eqref{eq:predictor} as the \textit{bacGP mode} and the estimator obtained by averaging Monte Carlo samples as the \textit{bacGP mean}. For the latter, we use the exact Hamiltonian Monte Carlo (HMC) sampler proposed by~\cite{Pakman2014Hamiltonian}. The unconstrained GP estimator is referred here as the \textit{GP mean}.

\begin{table}[t!]
\centering
\caption{Results (mean $\pm$ one standard deviation over ten replicates) on the monotonic example in~\eqref{eq:multidimFun} with $n=3D$. Both computational cost and quality of the bacGP predictions (mode and mean) are assessed. For the computation of the bacGP mean, different number of HMC samples are used and they are indicated as $N_{sim}$. Due to computational overhead, $N_{\text{sim}}$ decreases as $m$ increases.}
\label{tab:monotonicityHD} 
\begin{tabular}{cccccccc}
\toprule			
\multirow{2}{*}{$D$}  & \multirow{2}{*}{$m$} & \multirow{2}{*}{$N_{sim}$} & \multicolumn{2}{c}{CPU Time $[s]$} & \multicolumn{3}{c}{$Q^2$ [\%]} \\
& & & bacGP mode & bacGP mean & baGP mean & bacGP mode & bacGP mean \\	
\midrule						
10   & 180  & $10^4$ & 0.27  $\pm$ 0.01  & 15.66 $\pm$ 3.04 					& 78.9 $\pm$ 9.0 & 89.5 $\pm$ 4.5 & \textbf{90.4 $\pm$ 3.1} \\

20  & 360  & $10^3$ & 0.50  $\pm$ 0.05    & 10.78 $\pm$ 1.61 					& 82.5 $\pm$ 4.7 & 92.0 $\pm$ 1.0  & \textbf{92.7 $\pm$ 0.1} \\

40  & 720 & $10^2$ & 1.09 $\pm$ 0.06    & 6.46 $\pm$ 0.67   			& 86.1 $\pm$ 1.8 & \textbf{91.3 $\pm$ 1.2}  & 90.6 $\pm$ 1.8 \\

80  & 1440 & $10^2$ & 3.17 $\pm$ 0.15   & 17.82 $\pm$ 5.12 	& 86.1 $\pm$ 1.3 & \textbf{92.0 $\pm$ 1.0}  & 91.6 $\pm$ 1.2 \\

120 & 2160 & $10^2$ & 6.47 $\pm$ 0.35 & 47.68 $\pm$ 4.93 & 87.4 $\pm$ 1.1  & \textbf{89.9 $\pm$ 0.8} & 87.4 $\pm$ 0.7\\ 
\bottomrule
\end{tabular}
\end{table}
Table~\ref{tab:monotonicityHD} presents the CPU times and $Q^2$ values of the GP predictors averaged over 10 replicates using different random LHDs with size $n = 3D$. We observe an overall improvement in prediction accuracy when constraints are incorporated, resulting in $Q^2$ increases ranging between $2.5\%$ and $11\%$. Particularly, the predictor based on the bacGP mode often outperforms others while maintaining computational tractability. We also note that bacGP mean leads to competitive $Q^2$ values but requires more computationally intensive implementations. Lastly, as the number of observations increases, we notice that the inequality constraints are learned from the training data in the unconstrained baGP. Hence, the use of the constrained model is more advantageous in applications where data is scarce.

\subsection{Model selection via MaxMod}
\label{subsec:MaxMod}
We now consider the following 6D function aiming to test the efficiency of MaxMod:
\begin{equation}\label{eq:toolexample2Dblockadditiv}
y(\bx)= 2x_1x_3 + \sin(x_2x_4) + \arctan(3x_5+5x_6).
\end{equation}
It is worth noting that $y$ is non-decreasing with respect to all its input variables.
A prior sensitivity analysis suggests that MaxMod is likely to prioritize activating the first input variables, given their higher contribution to the Sobol indices: $S_1=S_3 \approx 0.41$, $S_2=S_4\approx 0.08$, $S_5\approx 0.05$, $S_6\approx 0.1$. Furthermore, since the function $(x_1,x_3)\mapsto x_1x_3$ is componentwise linear, we anticipate the algorithm to activate and merge only these variables, without any further refinement. Similarly, functions defined over other variables may not belong to the vector space spanned by the tensorized hat basis functions, suggesting that more knots in those subdivisions might be necessary. To demonstrate that MaxMod is also effective in dimension reduction, we slightly modified the function $y$ by introducing twenty additional dummy input variables, denoted as $x_7, \ldots, x_{26}$. Under this scenario, we expect the algorithm to focus on activating the first six input variables. 

\begin{figure}[t!]
\centering
\begin{tabular}{cc}
\includegraphics[height=0.45\linewidth]{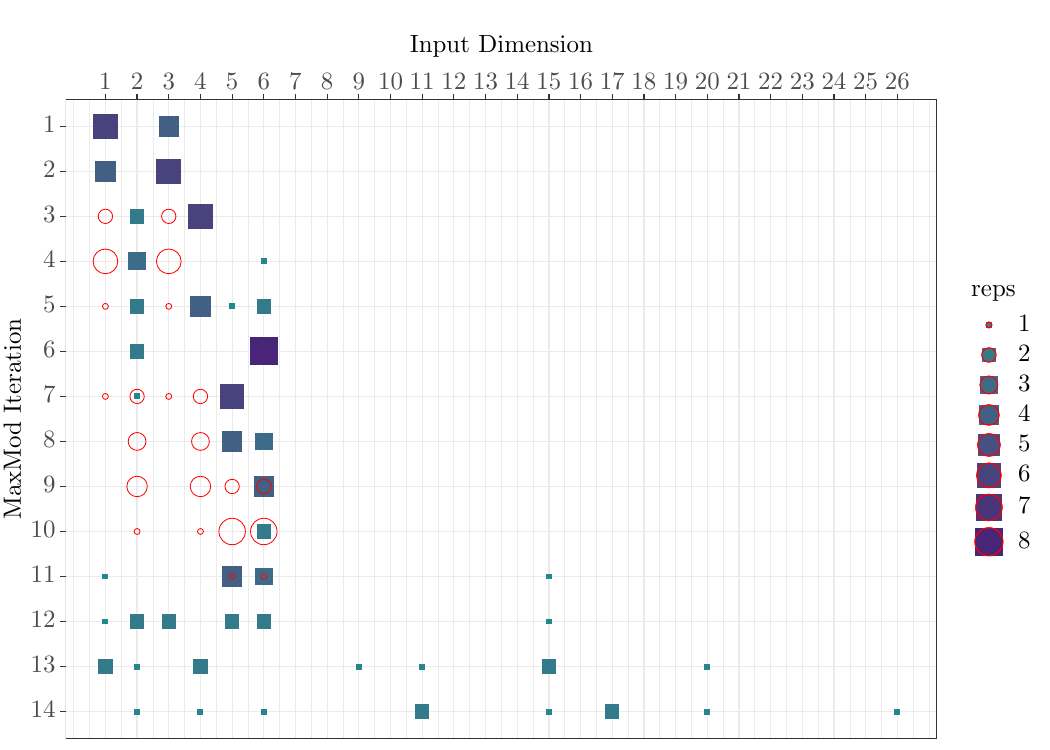}
\includegraphics[height=0.4275\linewidth]{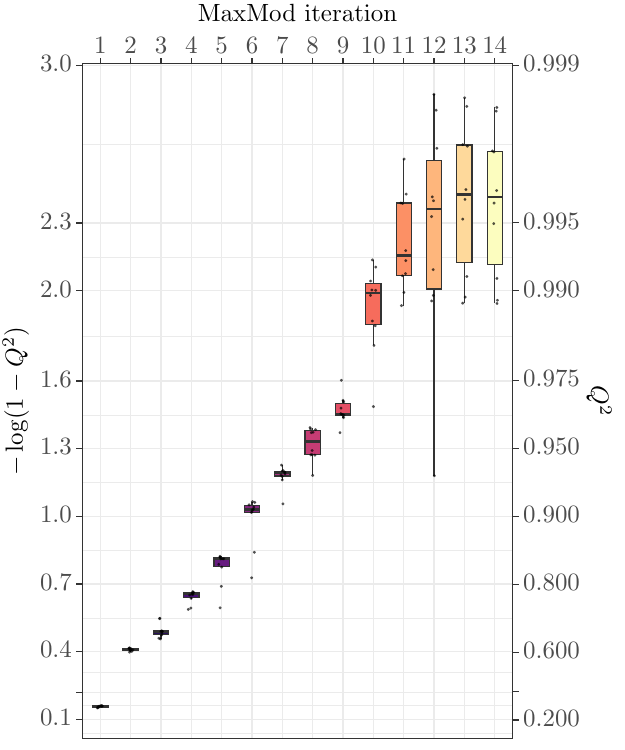}
\end{tabular}
\caption{Model selection via MaxMod when considering the target function in~\eqref{eq:toolexample2Dblockadditiv}. The panels show: (left) the choices made by MaxMod and (right) the boxplot of the $Q^2$ criterion per iteration of the algorithm. Results are shown for ten replicates of the experiment considering different LHD-based training datasets with $n = 7D_o$ with $D_o = 6$ the number of active input variables. In the left panel, squares represent variables that are newly selected or refined by MaxMod, while red circles represent variables that are being merged. The size and color of the markers indicate the frequency of the corresponding choice made by MaxMod over multiple iterations. In the right panel, colours of the boxes correspond to the median: lighter colours correspond to higher medians.}
\label{fig:toolexChoicesQ2}
\end{figure}
Figure~\ref{fig:toolexChoicesQ2} shows the decisions made by MaxMod alongside boxplots showing the $Q^2$ criterion per iteration for the ten different replicates. In the right panel, we observe that MaxMod initially activates variables $x_1$ and $x_3$. Following their activation in the first two iterations, the algorithm then decides between merging them into a single block or activating variables $x_2$ and $x_4$. It then proceeds with options such as creating a new block with $x_2$ and $x_4$, activating and merging the remaining variables $x_5$ and $x_6$, or refining the knots of $x_2, x_4, x_5$ and $x_6$. As anticipated, variables $x_1$ and $x_3$ are less frequently refined. By the  11th iteration, the algorithm consistently identifies the true partition $\partition = \{\{1,3\}, \{2,4\}, \{5,6\}\}$ across all ten replicates in the experiment. In the right panel, we observe that the $Q^2$ criterion improves with each iteration, leading to a stable (median) behavior above $Q^2 = 0.995$ after twelve iterations. 

Note that beyond twelve iterations, MaxMod starts considering the activation of dummy variables or refining variables $x_1$ and $x_3$, which is an undesired behavior considering the nature of the target function in~\eqref{eq:toolexample2Dblockadditiv}. This behavior may be attributed to significant empirical correlations between dummy variables and active variables due to the experimental design. To mitigate this issue, adapting the stopping criterion of the algorithm to achieve convergence earlier when neither the $\Lcrit$ nor the $\SMSE$ criterion shows significant improvement would be beneficial.

\begin{figure}[t!]
    \centering
    {\includegraphics[width=0.9\linewidth]{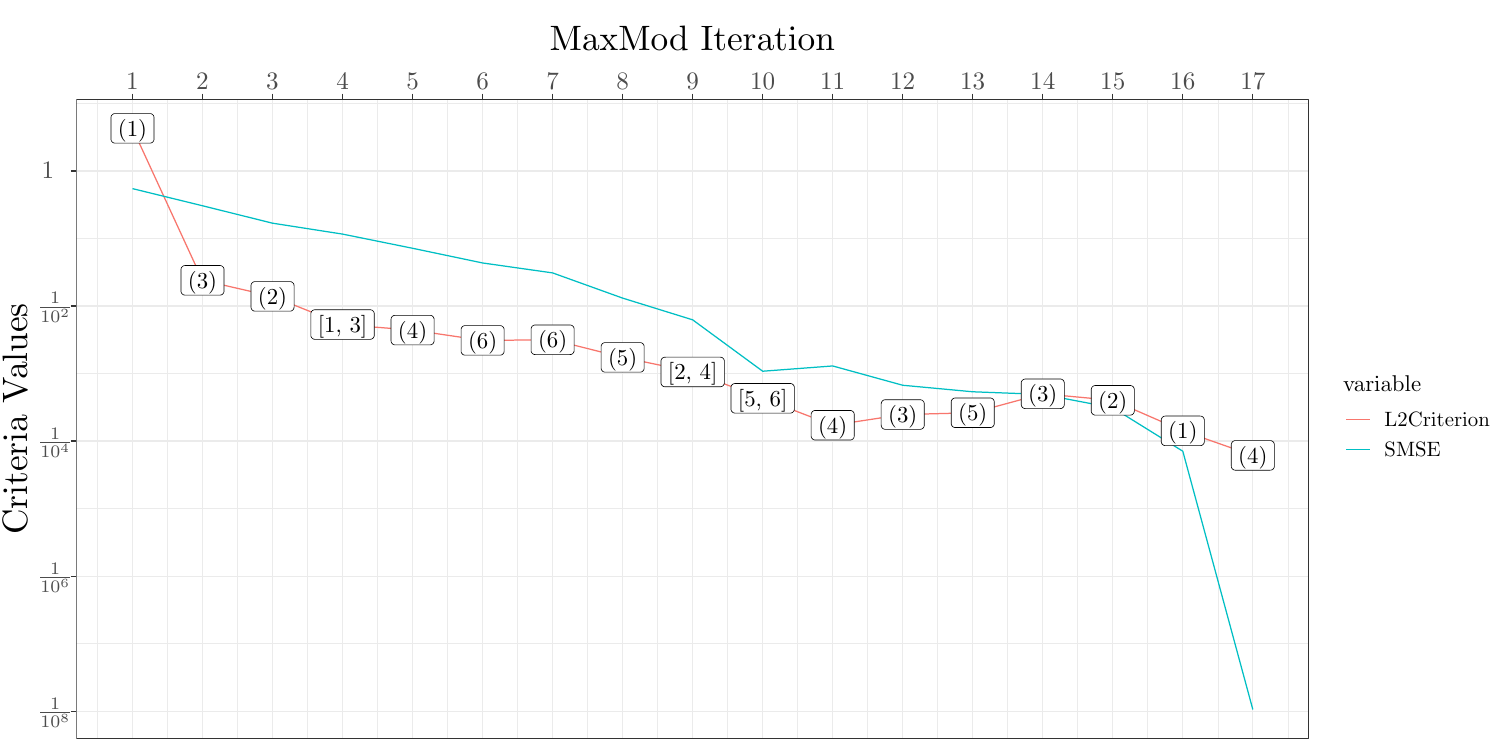}}
    \caption{Evolution over MaxMod iterations of the $\Lcrit$ (red) and $\SMSE$ 
    (blue). Both criteria are defined in~\eqref{eq:L2criterion} and~\eqref{eq:Q2}, respectively. The choice made by the algorithm per iteration is displayed in a text box where ``$(i)$" indicates the activation or the refinement of the variable $i$, while ``$[i_1,\ldots, i_k]$" indicates the creation of a block composed of variables $i_1,\ldots,i_k$.}
    \label{fig:toolCriteria}
\end{figure}
Figure~\ref{fig:toolCriteria} shows that both $\SEcrit$ and $\Lcrit$ tend to decrease over iterations for a fixed replicate. On the 13th iteration of MaxMod, it can be observed that the SMSE slightly increases. One might wonder why the interpolation of the predictor $\predictor$ on the 10th iteration is better than the one in the 11th. Indeed, $\predictor$ minimizes an interpolation problem in an RKHS \cite{wahba1990spline}. In fact, we have an inclusion of RKHS as for the bases. Hence, it is expected that the solution in a higher-dimensional space would better interpolate the observations. However, the noise variance $\tau^2$ alters the nature of the optimization problem. This issue is further discussed in Appendix~\ref{appendix:SEevolution} where we theoretically demonstrate that increasing the dimensionality of the RKHS space can degrade the solution in terms of interpolation.

\subsection{Real application: Coastal flooding}
\label{subsec:coastalAPP}

We now examine a coastal flood application in 5D previously studied in~\cite{azzimonti2019profile,LopezLopera2019lineqGPNoise,Bachoc2022MaxMod}. The dataset is available in the R package \texttt{profExtrema}~\cite{azzimonti2018profextrema}. The application focuses on the Boucholeurs district located on the French Atlantic Coast near the city of La Rochelle. This site was hit by the Xynthia storm in February 2010, which caused the inundation of several areas and severe human and economic damage.
We analyze here the flooded area ($A_{flood}$ [$m^2$]) induced by overflow processes by using the hydrodynamic numerical model detailed in \cite{rohmer2018casting}. The dataset comprises $200$ numerical results of $A_{flood}$, each of them being related to the values of the parameters that describe the temporal evolution of the tide and the surge. The tide temporal signal is simplified and assumed to be represented by a sinusoidal signal, parameterized by the high-tide level $T > 0$. The surge signal is modeled as a triangular function defined by four parameters: the surge peak $S > 0$, the phase difference ($\phi$ [$h$]) between surge peak and high tide, the rising time ($t^-$ [$h$]) and falling time ($t^+$ [$h$]) of the triangular signal. Figure \ref{fig:surgeTide_graphANOVA} (left panel) shows a schematic representation of both signals. We refer to~\cite{azzimonti2019profile,rohmer2018casting} for further details on the context and the physical meaning of these variables.

We assume, as illustrated by~\cite{azzimonti2019profile}, that $A_{flood}$ is non-decreasing with respect to $T$ and $S$. This assumption makes sense from the viewpoint of the flooding processes because both variables have a direct increasing influence on the offshore forcing conditions. In other words, the higher $T$ or $S$, the higher the total sea level (which is given by the sum of the tide and surge signals, see Figure~\ref{fig:surgeTide_graphANOVA}, bottom-left panel), and thus the higher the expected total flooded area. Adopting the procedure used in~\cite{LopezLopera2019lineqGPNoise}, we consider as outcome $y := \log_{10}(A_{flood})$ to ensure positivity, and we apply the transform $\phi \mapsto (1+\cos(2\pi \phi))/2$. There, these transformations led to improvements in the $Q^2$ criterion.

We perform twenty replicates of the experiment using different training datasets, each comprising $35\%$ of the database (i.e., $n = 70$), and evaluate the $Q^2$ criterion on the remaining data. We propose a bacGP with non-decreasing constraints on the input variables $T$ and $S$. To prevent overfitting and ensure stable results, we early stop MaxMod after ten iterations, noting that stability is achieved after the first eight iterations (see Figure~\ref{fig:MaxModEnergyBRGM}). For prediction purposes, we focus solely on the predictor provided by MaxMod (see Algorithm~\ref{Algo:MaxMod}). We compare the $Q^2$ results to those obtained by the conditional mean of a non-additive unconstrained GP accounting only for the input variables already activated by MaxMod. The unconstrained model is implemented using the R package \texttt{DiceKriging}~\cite{RoustantDice2012}.

\begin{figure}[t!]
\centering
\begin{tabular}{cc}
\includegraphics[height=0.4\linewidth]{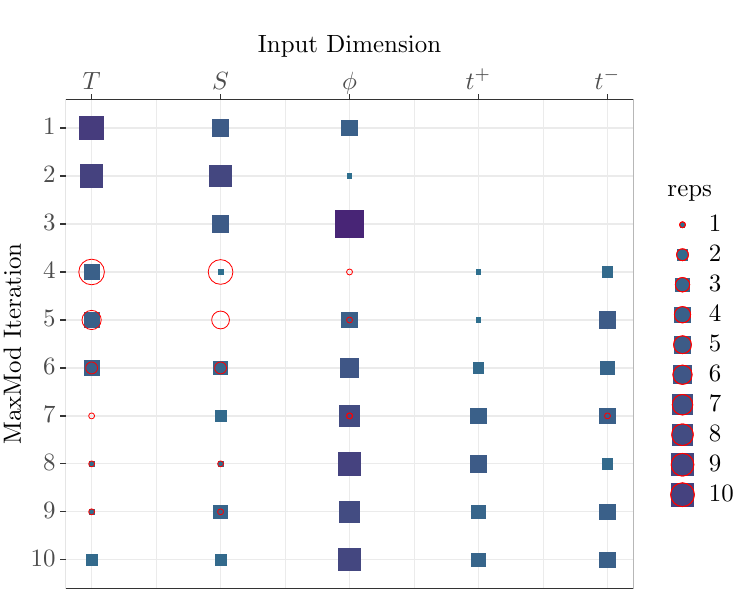}		
\includegraphics[height=0.373\linewidth]{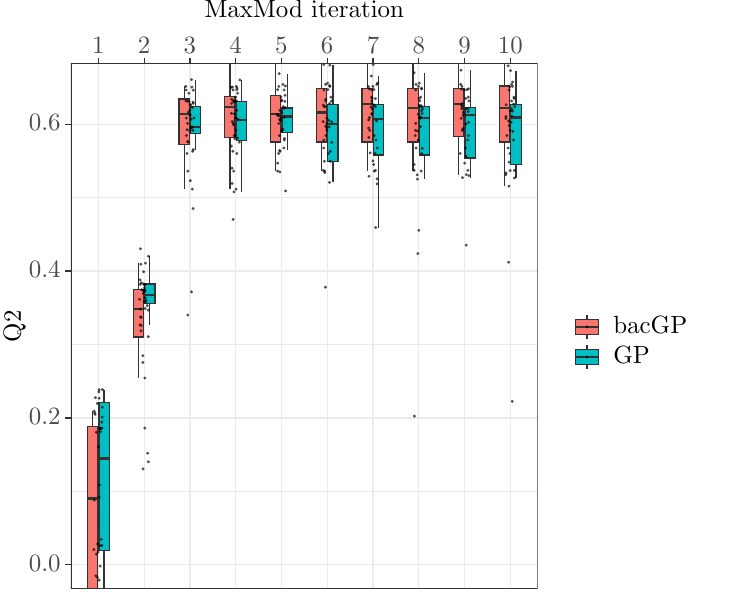}
\end{tabular}
\caption{Model selection via MaxMod for the coastal flooding application in Section~\ref{subsec:coastalAPP}. The panels show: (left) the choices made by MaxMod and (right) the boxplot of the $Q^2$ criterion per iteration of the algorithm. Results are shown for twenty replicates of the experiment considering different training datasets considering 35\% of the database (i.e. $n = 70$). Description of the first panel is the same as in Figure~\ref{fig:toolexChoicesQ2}. For the second panel, $Q^2$ results led by the bacGP (red) are compared to those obtained by an non-additive unconstrained GP (blue) defined on the active dimensions in the subpartition found by MaxMod. We recall here that $\bx = (T,S,\phi, t^+, t^-)$.}
\label{fig:Q2BRGM}
\end{figure}
Figure~\ref{fig:Q2BRGM} (left panel) illustrates the progression of MaxMod. Initially, it activates the variables $T$, $S$, and $\phi$. By the third iteration, it focuses on refining variables $T$ and $\phi$, merging $T$ and $S$, or activating additional variables. After ten iterations, the algorithm deems all five input dimensions relevant and suggests considering interactions between $T$ and $S$. 
These results can be interpreted in terms of flood processes:
\begin{itemize}
    \item \textbf{The importance of $S$ and $T$} is physically significant since these two variables have a direct impact on the sea level at the coast, and therefore on the total amount of water that can potentially invade inland in the event of flooding. \item \textbf{The mirroring role of $S$ and $T$} explains the relevance of merging them, i.e., the increase in $T$ or $S$ is interchangeable.
    \item \textbf{The importance of $\phi$} is natural if we consider that when it is equal to zero (i.e.,  when $(1+\cos(2\pi \phi))/2$ is equal to one), the tide and surge signals are in phase, and then the total sea level is maximum.
\end{itemize}
These interpretations are consistent with a sensitivity analysis using the FANOVA-decomposition in~\cite{muehlenstaedt2012} and the total interaction index in~\cite{fruth2014total}. The latter are estimated using a non-additive unconstrained GP (with a constant trend) trained on the entire database, and the estimator in~\cite{liu2006estimating} with 50$k$ function evaluations. Consistently with our results, this experiment highlights the relevant interaction between $S$ and $T$ and, to some extent, between $T$ and $\phi$, with a total interaction index of the order of 20\%. The interaction structure is shown in Figure~\ref{fig:surgeTide_graphANOVA} (right panel). 
\begin{figure}[t!]
    \centering
    \begin{minipage}{0.4\linewidth}
        \centering
        \includegraphics[height=0.13\textheight]{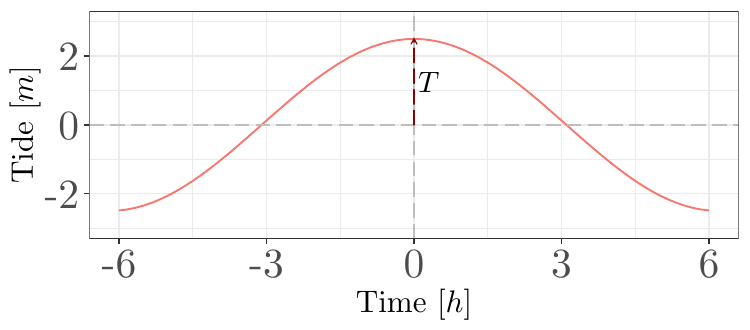}\\
        \vspace{-5.3ex}
        \includegraphics[height=0.13\textheight]{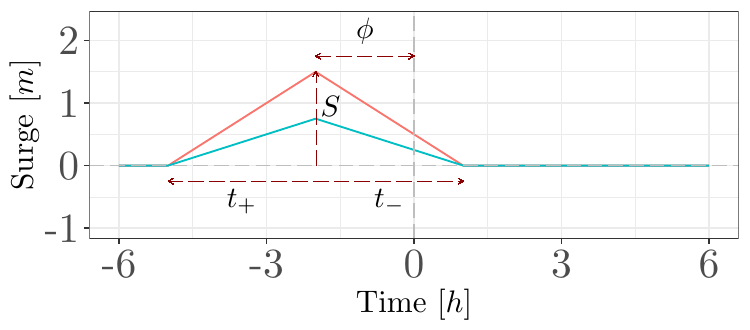}\\
        \vspace{-5.3ex}
        \includegraphics[height=0.13\textheight]{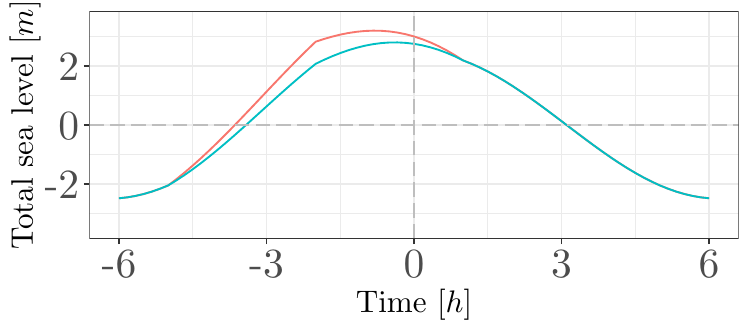}
    \end{minipage}
    \hspace{3ex}
    \begin{minipage}{0.37\linewidth}
        \centering
        {\includegraphics[height=0.28\textheight]{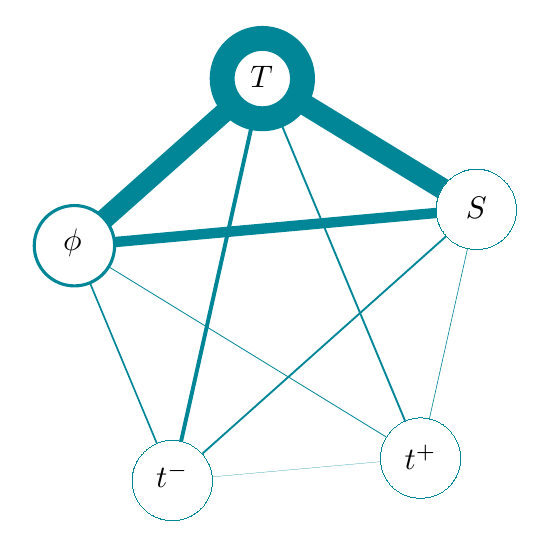}}
    \end{minipage}
    
    \caption{(top-left) Schematic representation of the tide and (middle-left) and surge temporal signal used in the real test case. The input variables correspond to the parameters outlined with dashed arrows. The sum of both signals results in the total sea level (bottom-left) which determines the offshore forcing conditions of the hydrodynamic numerical model used to simulate the flooding processes. Two examples are shown where the surge peak $S$ varies. It can be observed that a lower value of $S$ corresponds to a lower total sea level height. (right) FANOVA graph representing the interaction structure in the coastal flooding application. The linewidth of the nodes is proportional to the Sobol first order index (main effect) and the linewidth of the  graph edges proportional to the total interaction index.}
    \label{fig:surgeTide_graphANOVA}
\end{figure}
\begin{figure}[t!]
    \centering
    \includegraphics[width=0.48\linewidth]{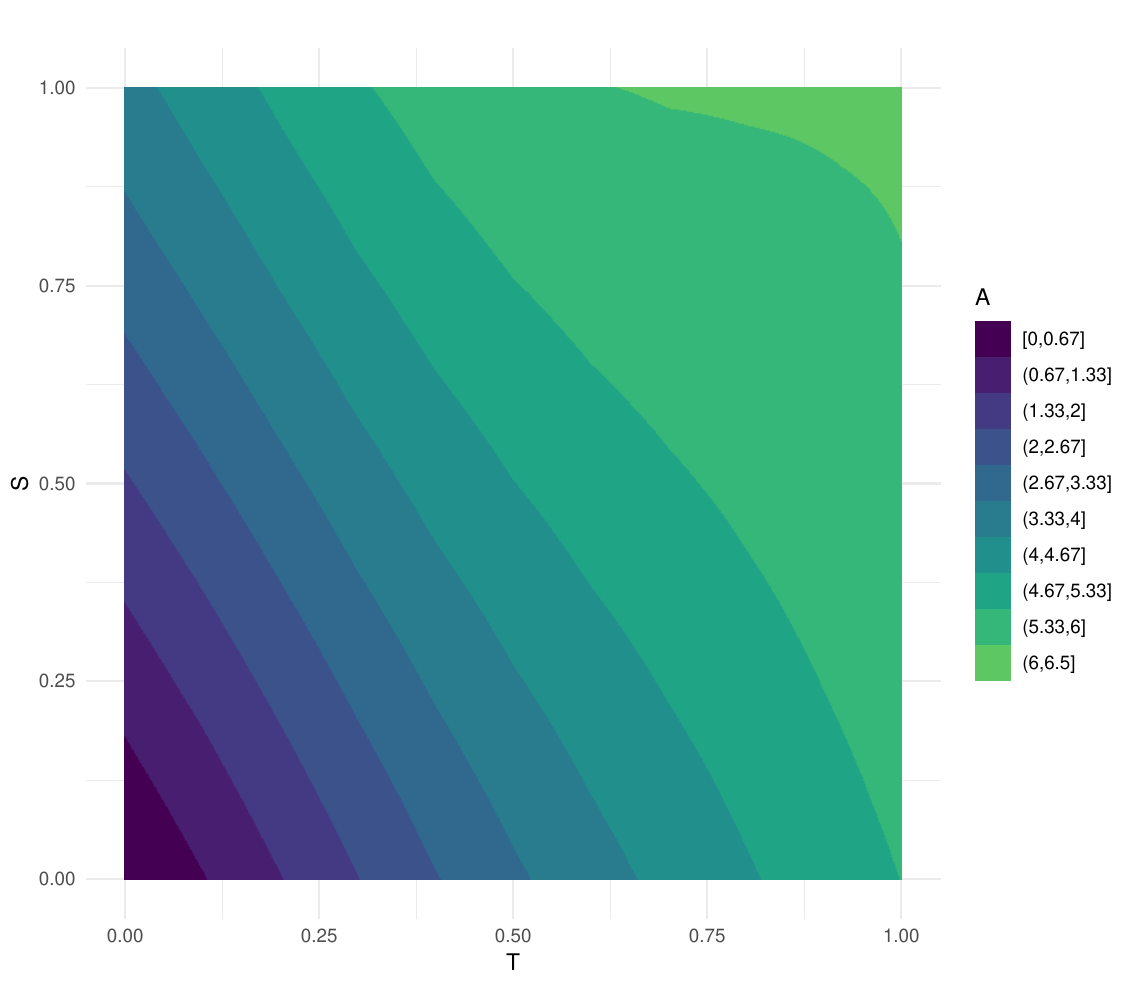}
    \includegraphics[width=0.48\linewidth]{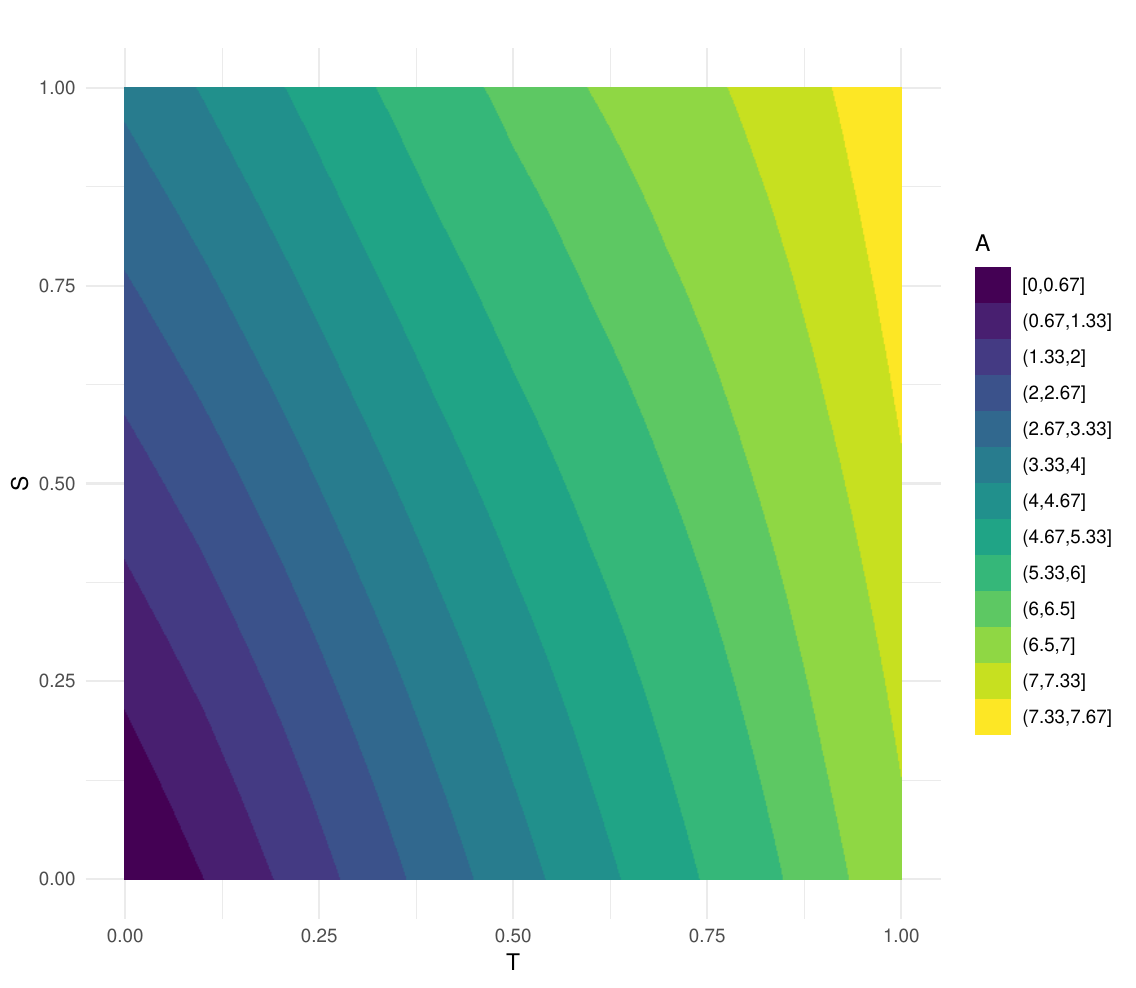}
    \label{fig:2Dplotcostalfun1}
    \caption{Bivariate representation of $\widehat{y}_1(S, T, \phi)$ for (left panel) $\phi = \pi$ and (right panel) $\phi = 0$ for the coastal flooding study in Section~\ref{subsec:coastalAPP}.}
    \label{fig:BRGM_2dplot}
\end{figure}

We recall that the aforementioned sensitivity analysis is obtained using the entire database (i.e. $n = 200$). Interestingly, when repeating the experiment with only 35\% of the database (i.e. $n = 70$), as suggested when testing MaxMod, the interaction structure could hardly be retrieved. The total interaction indices were highly variable over 20 replicates of the experiments. For $S$ and $T$, the total interaction index ranged between from 2\% to 15\%, and for $T$ and $\phi$, from 7\% to 23\%. On the other hand, MaxMod successfully identified the interaction between $S$ and $T$ even with the limited number of samples, although detecting the interaction $T$ and $\phi$ remained challenging.

The MaxMod algorithm has also the practical advantage of providing the functional relationships between the three variables, which allows us to get deeper insight in their joint influence on $A_{flood}$. We recall that, after convergence of MaxMod, the inferred additive structure of the function is $\widehat{y}(S,T,\phi, t_{+},t_{-})= \widehat{y}_1(S,T,\phi) + \widehat{y}_2(t_{+}) + \widehat{y}_3(t_{-})$. Figure \ref{fig:BRGM_2dplot}, showing 2-dimensional visualization of the function $\widehat{y}_1(S,T,\phi)$  for $\phi = \pi$ and $\phi = 0$, confirms an expected behaviour from the viewpoint of flood processes. It seems that $\widehat{y}_1(S, T, 0)> \widehat{y}_1(S, T, \pi)$ for any $(S,T)$, this observation aligns with the tide and surge signals becoming increasingly in phase, leading to a larger flooded area. For instance, consider the combinations of ($T$, $S$) for which the log-transformed flooded area $y := \log_{10}(A_{flood}) \geq 6$. As $\phi$ decreases, the range of admissible ($T$, $S$) combinations expands. This zone extends with respect to $\phi$ almost twice as rapidly along the $T$ axis compared to the $S$ axis. This suggests that exceedance is allowed for a more restrictive range of $S$ values, further confirming the dominant influence of $T$ (see Figure \ref{fig:surgeTide_graphANOVA}, right panel). A finer analysis is made in Appendix \ref{appendix:Analysis for the coastal flooding application}.

Regarding the $Q^2$ criterion (Figure~\ref{fig:Q2BRGM}, right panel), the first three iterations of MaxMod are crucial for activating the most ``expressive'' input variables, leading to $Q^2 > 0.6$. Subsequent iterations yield slight but consistent improvements, outperforming predictions from unconstrained GPs. Similar $Q^2$ results have been reported in~\cite{LopezLopera2019lineqGPNoise} for non-additive constrained GPs without MaxMod.

\begin{figure}
	\centering

\includegraphics[width=0.96\linewidth]{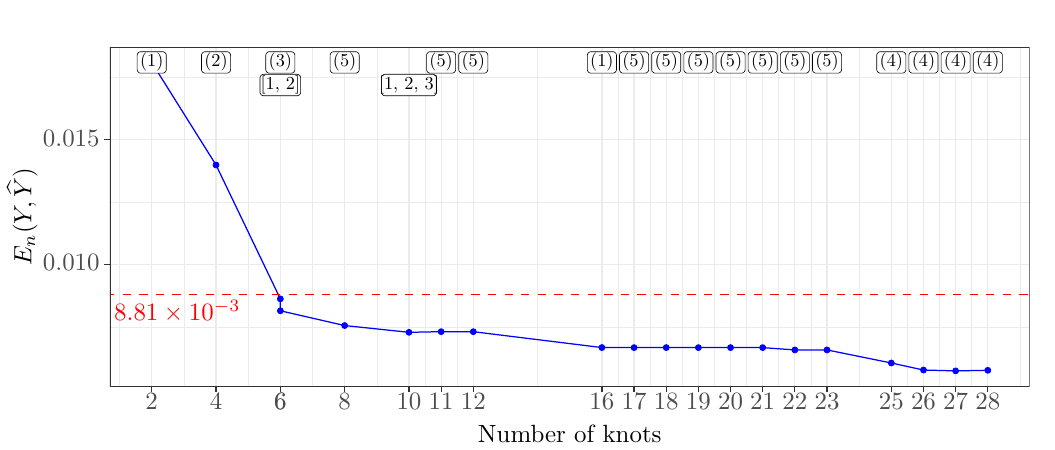}
\caption{Evolution of the bending energy $E_n$ through MaxMod iterations for the bacGP. The choices of the algorithm are detailed by labels defining the same choices of those described in Figure~\ref{fig:toolexChoicesQ2}. The red dashed line indicates the bending energy for the non-additive constrained GP after convergence of MaxMod~\cite{Bachoc2022MaxMod}.}
\label{fig:MaxModEnergyBRGM}	
\end{figure}

Finally, we aim to compare the results obtained here with those reported in~\cite{Bachoc2022MaxMod}. In their study, model selection for a non-additive constrained GP via MaxMod using the entire database resulted in a bending energy (see definition in \eqref{eq:En}) $E_n = 8.81\times10^{-3}$ for a model with $\nknots = 432$ multi-dimensional knots. Here, as shown in Figure~\ref{fig:MaxModEnergyBRGM}, comparable $E_n$ values are achieved after only three iterations of MaxMod, requiring significantly fewer knots $\nknots$. 
After convergence, our framework attains $E_n(Y, \widehat{Y}) = 4.2\times10^{-3}$ with only $\nknots = 28$. This represents a substantial computational improvement in simulation tasks, as the complexity depends on sampling a $\nknots$-dimensional truncated Gaussian vector. The improvements in $E_n$ stem from exploiting the latent block-additive structure and incorporating the new criterion $\MaxModCrit$ (see definition in~\eqref{eq:MaxModCriterion}), which explicitly targets minimization of the squared error (equivalent to bending energy up to renormalization).

\section{Conclusion}
\label{sec:conclusions}

We introduced a novel block-additive constrained GP framework that allows for interactions between input variables while ensuring monotonicity constraints. As shown in the numerical experiments, the block-additive structure of the model makes it particularly well-suited for functions characterized by strong inter-variable dependencies, all while maintaining tractable computations. For model selection (i.e., the choice of the blocks), we developed the sequential MaxMod algorithm which relies on the maximization of a criterion constructed with the square norm of the modification of the MAP predictor between consecutive iterations and the square error of the  predictor at the observations. MaxMod also seeks to identify the most influential input variables, making it efficient for dimension reduction. Our approach provides efficient implementations based on new theoretical results (in particular the conditions for inclusion relationships between bases composed of hat basis functions and the corresponding change-of-basis matrices) and matrix inversion properties. R codes were integrated into the open-source library \texttt{lineqGPR}~\cite{LineqGPR}.

Through various toy numerical examples, we demonstrated the framework's scalability up to 120 dimensions and its ability to identify suitable blocks of interacting variables. We also assessed the model in a real-world 5D coastal flooding application. In the latter, the derived blocks together with the block-predictors have proven to be a key for interpreting the physical processes acting during flooding. Only a limited budget of observations of the coastal flooding simulator is necessary, which is beneficial given the high cost of this simulator, to identify the most influential factors, their interactions as well as their functional relationships.

The proposed work focused on applications satisfying monotonicity constraints, but it can be used to handle other types of constraints, such as componentwise convexity.
We note that many applications in fields such as biology and environmental sciences require handling boundedness and positivity constraints. For the additive case, these types of constraints do not verify the equivalence in~\eqref{eq:EquivalenceCCp}. Therefore, a potential future direction is to adapt the proposed framework to handle these constraints. 

Additionally, theoretical guarantees of the MaxMod algorithm could be further investigated. Indeed, it would be beneficial to show as in \cite{Bachoc2022MaxMod} that the sequence of predictors converges to the infinite-dimensional constrained minimization solution in the RKHS induced by the kernel of the GP as developed in \cite{Bay2016KimeldorfWahba,grammont2024error}. 
Moreover, except from \cite{Bachoc2010cMLE}, very few asymptotic results exist in the setting where the number of observations goes to infinity. It would be interesting to obtain more of these results, in particular related to the estimation of block structures.

\section*{Acknowledgement}
This work was supported by the projects GAP (ANR-21-CE40-0007) and BOLD (ANR-19-CE23-0026), both projects funded by the French National Research Agency (ANR). Research visits of AFLL at IMT and MD at UPHF have been funded by the project GAP and the National Institute for Mathematical Sciences and Interactions (INSMI, CNRS), as part of the PEPS JCJC 2023 call. We thank the consortium in Applied Mathematics CIROQUO, gathering partners in technological research and academia in the development of advanced methods for Computer Experiments, for the scientific exchanges allowing to enrich the quality of the contributions.
We finally thank Louis B\'ethune for the first Python developments of baGPs.

\renewcommand{\arraystretch}{1.5}
\begin{table}
\caption{List of symbols for Sections \ref{sec:Construction of the predictor} and \ref{sec:Conditioning a finite-dimensional baGP}, with the page numbers where the symbols are introduced.}
\centering
\small
\begin{tabular}{p{3.2cm}|p{10.5cm}|p{0.9cm}} 
\hline 
{\bf Symbol} & {\bf Description}  &	 {\bf Page} \\
\hline
$\subpartition=\{\blockset_1,\ldots,\blockset_B\}$ & Subpartition of $\{1, \dots, n\}$ &  \pageref{page:subpartition} \\
\hline 
$\blockset_j$ & Subset of $\{1, \dots, n\}$ corresponding to a block of variables & \pageref{page:blockset}\\
\hline 
$\sizeblock$ & Number of blocks of variables & \pageref{page:sizeblock}\\
\hline
$s^{(i)}=(t^{(i)}_1,\ldots, t^{(i)}_{m^{(i)}})$ & Subdivision for the variable $i$ (set of knots) & \pageref{page:subdivision}\\
\hline
$m^{(i)}$ & Size of the subdivision $s^{(i)}$ (number of one-dimensional knots)& \pageref{page:subdivision}\\
\hline
$\subdivset = (s^{(1)}, \cdots, s^{(D)})$ & Subdivisions & \pageref{page:subdivisions} \\
\hline
$\setLSBj$ & Set of multi-indices for a block $\blockset_j$ & \pageref{page:multiset_j}\\
\hline
$\ul_j, \uk_j $ & Elements of $\setLSBj$  &  \pageref{page:ulj} \\
\hline
$\setTSBj$ & Set of knots in the multidimensional space $\R^{|\blockset_j|}$ from $S$ &  \pageref{page:hatfun} \\
\hline 
$\basisSi$ & Basis created from a subdivision $s^{(i)}$ & \pageref{page:basisSi}\\
\hline
$\continuous(X^{\blockset_j},\R)$ & Set of continuous functions depending on variables indexed in $\blockset_j$& \pageref{page:continuous:blocksetj}\\
\hline
$\hatfun$ & One-dimensional hat basis function with knots $u<v<w$ &  \pageref{page:hatfun} \\
\hline
$\hatfunSik$ & One-dimensional hat basis function element of $\basisSi$ & \pageref{page:hatfunSik}\\
\hline
$\hatfunlj$ & Multi-dimensional hat basis function & \pageref{page:hatfunlj}\\
\hline
$\spaceESP$ & Space spanned by the functions $(\phi_{\ul_j})_{\ul_j\in \setLSBj, 1\leq j \leq \sizeblock}$ &
\pageref{page:spaceESP}\\
\hline 
$\projSP$ & Projection onto the space $\spaceESP$ &  \pageref{page:projSP}\\
\hline 
$\GPj$ & GP defined on the space $\R^{|\blockset_j|}$, depending on the variables in $\blockset_j$ &  \pageref{page:GPj}\\
\hline
$\kernlj$ & Kernel associated to $\GPj$ &  \pageref{page:kernlj}\\
\hline
$\BAGP = \GP_1+ \cdots + \GP_\sizeblock$ & Block-additive GP &  \pageref{page:BAGP}\\
\hline
$\kernlBAGP$ & Kernel associated to $\BAGP$ &  \pageref{page:kernlBAGP}\\
\hline
$\GPSP$ & Projection of $\BAGP$ onto the space $\spaceESP$ &  \pageref{page:GPSP}\\
\hline
$\kernlSP$ & Kernel associated to $\GPSP$ &  \pageref{page:kernlSP}\\
\hline
$\predictor$ & MAP of $\GPSP$ conditioned to observations and constraints &  \pageref{page:predictor}\\
\hline
\end{tabular}
 \label{table:list:of:symbols}
\end{table}

\bibliography{mabib}
\bibliographystyle{abbrv}

\appendix

\section{Proof of Proposition~\ref{prop:algebraic expression}}\label{appendix:squareNormCriteria}

To start this section, we introduce two notations ( see \eqref{eq:basisSBj} and \eqref{eq:basisSP}) aiming to improve the readability of the proof. Recall that a pair of blocks and subdivisions $(\subpartition, \subdivset)$ define bases for $\blockset_j \in \subpartition=\{\blockset_1, \cdots, \blockset_\sizeblock\},$
\begin{equation}
\label{eq:basisSBj}
\basisSBj:=(\phi_{\ul_j})_{\ul_j \in \setLSBj},
\end{equation}
and a general basis 
\begin{equation}
\label{eq:basisSP}
\basisSP:= \bigcup_{j=1}^{\sizeblock} \basisSBj.
\end{equation}

The aim is to compute the explicit expression of the quantity in \eqref{eq:L2criterion}:
\[
\left\|\widehat{Y}^{\subdivset^\star}_{\subpartition^\star} - \widehat{Y}^\subdivset_{\subpartition}\right\|^2_{L^2}.
\]

For the sake of readability, in this proof, we simplify the notations by removing the indexes ``$\subdivset$'' and ``$\subpartition$'' on every object. Variables denoted with a superscript $\star$ refer to the couple $(\subpartition^\star, \subdivset^\star)$ as defined in Section \ref{subsec:choicesMaxMod}. Variables without the superscript refer to the couple $(\subpartition, \subdivset)$. This leads to the following notations:

\begin{itemize}
\item $\widehat{Y} = \widehat{Y}^\subdivset_{\subpartition}$ 
(similarly, $\widehat{Y}^\star = \widehat{Y}^{\subdivset^\star}_{\subpartition^\star}$),
\item $\setL_j = \setLSBj$ (similarly, $\setL^\star_j = \setLSBjstar$),
\item $\widehat{Y}_j = \sum_{\ul_j \in \setL^{\subdivset}_{\blockset_j}} \bxi_{\ul_j} \phi_{\ul_j}$ (similarly, $\widehat{Y}^\star_j = \sum_{\ul_j \in \setL^{\subdivset^\star}_{\blockset_j^\star}}\bxi^\star_{\ul_j} \phi^\star_{\ul_j}$) where $\bxi:=(\bxi_1,\ldots, \bxi_\sizeblock)$ and $\bxi:=(\bxi^\star_1,\ldots, \bxi^\star_\sizeblock)$ are solution of \eqref{eq:xi:mod:sol},
\item $\setL = \setL^{\subdivset}_{\subpartition} = \bigcup_{j=1}^{\sizeblock} \setL_j$ (similarly, $\setL^\star
=
\setL^{\subdivset^\star}_{\subpartition^\star}
= \bigcup_{j=1}^{\sizeblock^\star} \setL^\star_j$),
\item $\bPhi_j := (\phi_{\ul_j})_{\ul_j \in \setLSBj}$ (similarly, $\bPhi^\star_j =(\phi^\star_{\ul_j})_{\ul_j \in \setLSBjstar}$) where $\phi_{\ul_j}$ and $\phi^\star_{\ul_j}$ are defined by \eqref{eq:phi:SBj}.
\end{itemize}

\subsection{Change of basis when updating the subdivision and/or the partition}

As a preliminary result,
we study the  change of basis of hat functions associated to two different pairs of blocks and subdivisions $(\subpartition, \subdivset)$ and $(\subpartition^\star,\subdivset^\star)$. The latter pair can be obtained after one iteration of the MaxMod algorithm from $(\subpartition, \subdivset)$ defined in Section~\ref{subsec:choicesMaxMod}. 
These changes of basis functions extend to several blocks of variables the results presented in \cite[Section SM2]{Bachoc2022MaxMod}.
Roughly speaking, an important idea is that a one-dimensional piecewise affine function $f$ defined on a subdivision remains  piecewise affine when defined on a finer subdivision. Furthermore, to express $f$ with the hat basis functions of the finer subdivision, it is sufficient to consider the values of $f$ on its knots.

\begin{lemma}[Expression of the elements of  $\basis^{\subdivset}_{\subpartition}$ in $\basis^{\subdivset^\star}_{\subpartition^\star}$]\label{lem:changebasis}
For $\subdivset=(s^{(1)},\ldots, \, s^{(D)})$ and $\subpartition=\{\blockset_1,\ldots, \blockset_\sizeblock\}$, we have the following explicit expressions of the basis functions in $\basisSP$ in the new basis $\basisstar$ for every choice of MaxMod introduced in Section \ref{subsec:choicesMaxMod}:  

\begin{description}
\item[Activate] 
Let $i_0$ be the index of the activated variable. Recall that this variable forms a new block.
Thus $\subpartition^\star=\{\blockset_1, \cdots, \blockset_\sizeblock, \{i_0\}\}$ and
$\basisstar$ is the set of functions 
\[ 
\basisstar = \basisSP \cup  \basis^{\subdivset^\star}_{\{i_0\}},
\]
where $\basis^{\subdivset^\star}_{\{i_0\}}=\{\bx \mapsto \widehat{\phi}_{-1,0,1}(x_{i_0}),\bx \mapsto \widehat{\phi}_{0,1,2}(x_{i_0})\}$. Hence, every function in $\basisSP$ lies in $\basisstar$.

\item[Refine] Let $s^{(i_0)}$ be the refined subdivision in the block $ \blockset_{j_0}$. Write $p^\star$ for the index of the left-nearest neighbor knot to $t^\star$ in the subdivision $s^{(i_0)}$: $t^{(i_0)}_{p^\star} <t^\star<t^{(i_0)}_{p^\star+1}$.
For any $j \neq j_0$, the elements $\phi_{\ul_j}$ of $\basisSBj$ are already in $\basisstar$. Consider a multi-index $\ul_{j_0}=(\ell_1,...,\ell_{|\blockset_{j_0}|}) \in \setL^{\subdivset}_{\blockset_{j_0}}$.  Without loss of generality we assume that the variable $i_0$ is the first in the block $\blockset_{j_0}$, with corresponding knots indexed by $\ell_1$ in $\ul_{j_0}$.
Let $\delta_{i_0} = (1,0,\ldots,0) \in \mathbb{R}^{|\blockset_{j_0}|}$.
If $\ell_{1} \not \in \{p^\star, p^\star+1\}$ then $\phi_{\ul_{j_0}} \in \basisstar$.
If $\ell_1=p^\star$ then
\[
\phi_{p^\star} = \phi^{\star}_{p^\star} + \phi_{p^\star}(t^\star) \phi^{\star}_{p^\star+1},
\]
which can be checked by computing the values at the knots of the finest subdivision $s^{\star(i_0)}$ (of indices $p^\star$ and $p^\star + 1$).
Similarly if $\ell_1=p^\star+1$, then
$\phi_{p^\star+1} =  \phi_{p^\star+1}(t^\star) \phi^{\star}_{p^\star+1} 
+
\phi^{\star}_{p^\star+2}$.
Finally for these two latter cases, we deduce, by tensorization,
\[
\phi_{\ul_{j_0}}
=
\begin{cases}
\phi^\star_{\ul_{j_0}} + \phi_{p^\star}(t^\star)
\phi^\star_{\ul_{j_0}+\delta_{i_0}}
& ~ \mbox{if} ~ \ell_1 = p^\star
\\ 
\phi_{p^\star+1}(t^\star)
\phi^\star_{\ul_{j_0}} + \phi^\star_{\ul_{j_0}+\delta_{i_0}}
& ~ \mbox{if} ~ \ell_1 = p^\star+1.
\end{cases}
\]

\item[Merge] In the case where we merge two blocks, suppose without loss of generality that $\blockset_1$ and $\blockset_2$ are merged, so that $\blockset^\star_1=\blockset_1 \cup \blockset_2$. For $j=1,2$, let $\blockset_j=\{i_{j,1},i_{j,2}, \ldots, i_{j,{|\blockset_j|}}\}$
with $i_{j,1} < \cdots < i_{j,{|\blockset_j|}}$. 
Then suppose that the elements in $\blockset^\star_1$ are ordered as $\blockset^\star_1 =\{i_{1,1}, \ldots, i_{1,{|\blockset_1|}}, i_{2,1}, \ldots, i_{2,{|\blockset_2|}}\}$.   
For any $j>2$, the basis functions $\phi_{\ul_j}\in \basisSBj$ are in $\basisstar$, since the block $\blockset_j$ is not modified by the merge. Now, consider $\ul_1=(\ell_1,...,\ell_{|\blockset_1|})\in \setL^\subdivset_{\blockset_1}$. 
Using that the hat basis functions corresponding to a block sum to one, the following equality holds:
\begin{eqnarray*}
\phi_{\ul_1}(\bx)
& = &
\left(
\prod_{a = 1}^{|\blockset_1|}
\phi_{\ell_a}^{(s^{(i_{1,a})})}(x_{i_{1,a}})
\right)
\cdot 
1
\\
& = &
\left(
\prod_{a = 1}^{|\blockset_1|}
\phi_{\ell_a}^{(s^{(i_{1,a})})}(x_{i_{1,a}})
\right)
\cdot 
\left(
\sum_{\ul_2 \in \setL^\subdivset_{\blockset_2} } 
\phi_{\ul_2}^{(\blockset_2)}
(x_{\ul_2})
\right)
 = 
\sum_{\ul^\star \in \setL^\star_{\ul_1}} \phi^\star_{\ul^\star}(\bx), 
\end{eqnarray*}
with $\setL^\star_{\ul_1}:= \{(\ell^\star_1,\ldots, \ell^\star_{|\blockset_1|+|\blockset_2|}) \in \setL^\subdivset_{\subpartition^\star}, (\ell^\star_1,\ldots, \ell^\star_{|\blockset_1|})=\ul_1\}$. Similarly, for $\phi_{\ul_2} \in \basis^\subdivset_{\blockset_2}$,
\[
\phi_{\ul_2} =\sum_{\ul^\star \in \setL^\star_{\ul_2}} \phi^\star_{\ul^\star},
\]
where $\setL^\star_{\ul_2}:=\{(\ell^\star_1,\ldots, \ell^\star_{|\blockset_1|+|\blockset_2|}) \in \setL^\subdivset_{\subpartition^\star}, (\ell^\star_{|\blockset_1|+1},\ldots, \ell^\star_{|\blockset_1|+|\blockset_2|})=\ul_2\}$.
\end{description} 
\end{lemma}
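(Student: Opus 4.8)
The plan is to reduce the whole statement to two elementary one-dimensional facts about the hat basis and then lift them by tensorization. The two facts, both read off directly from the definition in \eqref{eq:1-dimHatBasis} using $\phi^s_i(t_k)=\delta_{ik}$, are: (i) \textbf{nodal reconstruction}, namely that any continuous piecewise-affine function $f$ on a subdivision $s=(t_1,\ldots,t_m)$ satisfies $f=\sum_{i=1}^m f(t_i)\phi^s_i$, since both sides are piecewise affine with the same breakpoints and agree at every knot; and (ii) \textbf{partition of unity}, $\sum_{i=1}^m \phi^s_i\equiv 1$ on $[0,1]$, which is just (i) applied to the constant function $f\equiv 1$. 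These two facts, together with the product structure \eqref{eq:phi:SBj}, drive all three cases.

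For \textbf{Activate} the argument is essentially definitional. Activating $i_0$ leaves every existing block $\blockset_j$ and every subdivision $s^{(j)}$ with $j\neq i_0$ unchanged, so each $\phi_{\ul_j}\in\basisSBj$ is literally an element of $\basisstar$. The only new object is the singleton block $\{i_0\}$ with subdivision $s^{\star(i_0)}=(0,1)$; unpacking \eqref{eq:1-dimHatBasis} with the conventions $t_0=-1$, $t_{m+1}=2$ yields exactly the two functions $\widehat{\phi}_{-1,0,1}$ and $\widehat{\phi}_{0,1,2}$. Hence $\basisstar=\basisSP\cup\basis^{\subdivset^\star}_{\{i_0\}}$ and the inclusion $\basisSP\subset\basisstar$ is immediate.

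For \textbf{Merge} I would use the partition of unity (ii), tensorized over the second block. Since $\phi_{\ul_1}$ depends only on the variables of $\blockset_1$, multiplying by $1=\sum_{\ul_2\in\setL^{\subdivset}_{\blockset_2}}\phi_{\ul_2}$ and distributing gives $\phi_{\ul_1}=\sum_{\ul_2}\phi_{\ul_1}\phi_{\ul_2}$. By \eqref{eq:phi:SBj} each product $\phi_{\ul_1}\phi_{\ul_2}$ is precisely the hat function of the merged block indexed by the concatenated multi-index, so the sum ranges exactly over $\setL^\star_{\ul_1}$; the statement for $\phi_{\ul_2}$ is symmetric. Blocks $\blockset_j$ with $j>2$ are untouched, so their basis functions already lie in $\basisstar$.

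The \textbf{Refine} case is where the real work and the main obstacle lie, because of the index bookkeeping. I would first settle the one-dimensional decomposition: in the refined subdivision $s^{\star(i_0)}$ the old knot at index $p^\star$ keeps index $p^\star$, the inserted knot $t^\star$ takes index $p^\star+1$, and the old knot formerly at $p^\star+1$ shifts to $p^\star+2$. Applying nodal reconstruction (i) to the coarse hat function $\phi_{p^\star}$ expressed in the fine basis, its only nonzero values on the fine grid are $\phi_{p^\star}(t^{(i_0)}_{p^\star})=1$ and $\phi_{p^\star}(t^\star)$, giving $\phi_{p^\star}=\phi^\star_{p^\star}+\phi_{p^\star}(t^\star)\phi^\star_{p^\star+1}$; the case $\ell_1=p^\star+1$ is symmetric, and coarse functions whose support avoids $t^\star$ (i.e. $\ell_1\notin\{p^\star,p^\star+1\}$) are unchanged. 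Finally I would lift this to $\phi_{\ul_{j_0}}$: by \eqref{eq:phi:SBj} only the factor in the variable $i_0$ is modified, and substituting its one-dimensional decomposition while leaving the remaining factors intact reproduces the two stated formulas, with $\ul_{j_0}+\delta_{i_0}$ encoding the increment of the $i_0$-index. The delicate point is purely combinatorial, namely keeping the index shifts consistent and verifying that the tensor factors other than $i_0$ coincide between $\basisSP$ and $\basisstar$; this is exactly why computing nodal values on the fine grid, rather than manipulating the closed-form expressions, is the cleanest route.
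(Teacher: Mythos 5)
Your proposal is correct and follows essentially the same route as the paper: the Activate case is definitional, the Merge case uses the partition-of-unity identity $\sum_{\ul_2}\phi_{\ul_2}=1$ tensorized over the second block, and the Refine case is settled by evaluating the coarse hat function at the knots of the finer subdivision and then tensorizing. Your explicit isolation of the two one-dimensional facts (nodal reconstruction and partition of unity) is a slightly cleaner packaging of what the paper does inline, but the substance is identical.
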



\begin{corollary}\label{corol:xi'}
From Lemma \ref{lem:changebasis}, 
a linear combination of the former basis functions from $(\subdivset,\subpartition)$ is also a linear combination of the new basis functions from $(\subdivset^\star,\subpartition^\star)$. Formally,
for every vector $\widehat{\bxi} \in \R^{|\setL|}$ there exists a vector $\widehat{\bxi}'$ in $\R^{|\setL^\star|}$, obtained by the change of basis formula, such that 
\[
\bPhi^\top\widehat{\bxi}=\bPhi^{\star\top}\widehat{\bxi}',
\]
where $\bPhi$ (respectively $\bPhi^\star$) are the vector functions introduced in~\eqref{eq:bPhiSP} for the subpartition $\subpartition$ (respectively $\subpartition^\star$) and subdivision $\subdivset$ (respectively $\subdivset^\star$).
\end{corollary}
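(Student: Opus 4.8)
The plan is to read off the result directly from Lemma~\ref{lem:changebasis}, which already does the substantive work: for each of the three MaxMod operations (ACTIVATE, REFINE, MERGE) it expresses every element of $\basisSP$ as an explicit finite linear combination of elements of $\basisstar$. The corollary is then a one-line linear-algebra consequence of packaging these expressions into a single change-of-basis matrix.

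First I would encode the lemma as a matrix identity of vector-valued functions on $\domainD$. Writing $\bPhi=(\phi_\ell)_{\ell\in\setL}$ and $\bPhi^\star=(\phi^\star_{\ell^\star})_{\ell^\star\in\setL^\star}$ as column-vector functions, Lemma~\ref{lem:changebasis} provides, for every $\ell\in\setL$, coefficients $(M_{\ell,\ell^\star})_{\ell^\star\in\setL^\star}$ such that $\phi_\ell=\sum_{\ell^\star\in\setL^\star} M_{\ell,\ell^\star}\,\phi^\star_{\ell^\star}$. Stacking these rows yields a matrix $M\in\R^{|\setL|\times|\setL^\star|}$ with the functional identity $\bPhi=M\bPhi^\star$. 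Concretely, in the ACTIVATE case each old function already belongs to $\basisstar$, so the corresponding row of $M$ is a standard unit vector; in the REFINE case the rows indexed by multi-indices with $\ell_1\in\{p^\star,p^\star+1\}$ carry the two nonzero entries $\phi_{p^\star}(t^\star)$, $\phi_{p^\star+1}(t^\star)$ dictated by the lemma, and all other rows are unit vectors; in the MERGE case each old function maps to the indicator sum over its fiber $\setL^\star_{\ul_1}$ or $\setL^\star_{\ul_2}$, while the untouched blocks again give unit rows.

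Given $\widehat{\bxi}\in\R^{|\setL|}$, I would then set $\widehat{\bxi}':=M^\top\widehat{\bxi}\in\R^{|\setL^\star|}$ and compute
\[
\bPhi^\top\widehat{\bxi}=(M\bPhi^\star)^\top\widehat{\bxi}=\bPhi^{\star\top}M^\top\widehat{\bxi}=\bPhi^{\star\top}\widehat{\bxi}',
\]
which is exactly the claimed identity and makes the ``change of basis formula'' of the statement explicit as $\widehat{\bxi}'=M^\top\widehat{\bxi}$. There is no genuine obstacle here; the only point requiring a moment's care is the bookkeeping that the three cases of Lemma~\ref{lem:changebasis} are exhaustive and jointly assemble into a well-defined $M$ — in particular that every basis function left unchanged by an operation (the blocks $j\neq j_0$ under REFINE, the blocks $j>2$ under MERGE, or the functions already in $\basisstar$ under ACTIVATE) corresponds to a unit-vector row. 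Since each of the three MaxMod moves is treated completely in the lemma and each old function is written in the new basis, the matrix $M$ exists unconditionally and the corollary follows.
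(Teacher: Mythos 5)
Your proposal is correct and matches the paper's (implicit) argument: the paper treats this as an immediate consequence of Lemma~\ref{lem:changebasis}, and your explicit packaging of the lemma's coefficients into a matrix $M$ with $\bPhi = M\bPhi^\star$ followed by $\widehat{\bxi}' = M^\top\widehat{\bxi}$ is exactly the ``change of basis formula'' the corollary alludes to. No gaps.
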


\subsection{Computation of the L2Mod criterion}

Since our model is block additive, we have $\big(\widehat{Y}^{\star}-\widehat{Y}\big)^2=\left(\sum_{j=1}^{\sizeblock} \left[ \widehat{Y}_{j}-\widehat{Y}^\star_{j} \right] \right)^2$. By expanding the square and integrating, we deduce:
\begin{equation}
\label{eq:square norm criteria developped}
\left\|\widehat{Y}^{\star}-\widehat{Y}\right\|^2_{L^2}= \underbrace{\sum_{j=1}^{\sizeblock}\int_{[0,1]^{\blockset^\star_j}}\big(\widehat{Y}_{j}-\widehat{Y}^\star_{j}\big)^2\,d \lambda}_{S_1}+ 2\underbrace{\sum_{1\leq i<j\leq \sizeblock} \Big(\int_{[0,1]^{\blockset^\star_i}}(\widehat{Y}_{i}-\widehat{Y}^\star_{i})\, d \lambda \Big) \Big(\int_{[0,1]^{\blockset^\star_j}}(\widehat{Y}_{j}-\widehat{Y}^\star_{j})\, 
d \lambda \Big)}_{S_2},
\end{equation}
where $d \lambda$ is Lebesgue measure in the appropriate dimension. In $S_2$, we have exploited that the blocks are disjoint to write integrals of products as products of integrals.
We now investigate both sums $S_1$ and $S_2$ of \eqref{eq:square norm criteria developped} separately. Our approach for computing these two sums is to express $\widehat{Y}_{i}$ and $\widehat{Y}^\star_{i}$ in the ``finest'' basis corresponding to $\widehat{Y}^\star_{i}$ and 
to use the change of basis formulas of Lemma
\ref{lem:changebasis}.

\subsubsection{Computation of $\mathbf{S_1}$} \label{sssec:firstsum}

From Corollary \ref{corol:xi'}, we can consider the vector $\widehat{\bxi}'$, which satisfies $\bPhi^\top\widehat{\bxi}=\bPhi^{\star\top}\widehat{\bxi}'$.
Note that from \eqref{eq:predictor},
$\widehat{Y}(\bx) = \bPhi^{\top}(\bx)\widehat{\bxi}$
and
$\widehat{Y}^{\star}(\bx) = \bPhi^{\star\top}(\bx)\widehat{\bxi}'$. 
We then rewrite the sum $\mathbf{S_1}$ of \eqref{eq:square norm criteria developped} as
\begin{eqnarray*}
\sum_{j=1}^\sizeblock
\int_{[0,1]^{\blockset^\star_j}}(\widehat{Y}_{j}-\widehat{Y}^\star_{j})^2 \, d\lambda &=&
\sum_{j=1}^\sizeblock \int_{[0,1]^{\blockset^\star_j}} 
\Big(\sum_{\ul_j \in \setL^\star_j} (\widehat{\bxi}'_{j,\ul_j}-\widehat{\bxi}^\star_{j,\ul_j})\phi^{\star}_{\ul_j}\Big)^2 d\lambda\\
&=& 
\sum_{j=1}^\sizeblock \sum_{\ul_j,\ul'_j\in\setL^\star_j}
(\widehat{\bxi}'_{j,\ul_j}-\widehat{\bxi}^\star_{j,\ul_j})
(\widehat{\bxi}'_{j,\ul'_j}-\widehat{\bxi}^\star_{j,\ul'_j})
\int_{[0,1]^{\blockset^\star_j}}
\phi^{\star}_{\ul_j}\phi^{\star}_{\ul'_j}\, d \lambda.
\end{eqnarray*}
Now, we define the $|\setL^\star|$-dimensional vector $\beeta$ as
\begin{equation}\label{eq:eta vector}
\beeta=(\beeta_{j,\ul_j})_{1\leq j\leq B, \ul_j \in \setL^\star_j}, \hspace{1cm} \beeta_{j,\ul_j} = (\widehat{\bxi}'_{j,\ul_j}-\widehat{\bxi}^\star_{j,\ul_j}),
\end{equation} 
and for $j=1, \cdots, \sizeblock$, the $|\setL^\star_j|$-dimensional matrix $\bPsi^{j}$ as
\begin{equation}\label{eq: bPsi block}
\bPsi^{j}_{\ul_j,\ul'_j} =
\int_{[0,1]^{\blockset^\star_j}}\phi^{\star}_{\ul_j}\phi^{\star}_{\ul'_j}d\lambda =
\prod_{i \in \blockset^\star_j}\int_0^1 \phi^{\star(i)}_{{\ul_{j,i}}}\phi^{\star(i)}_{{\ul'_{j,i}}}\, d\lambda =
\prod_{i \in \blockset^\star_j} \Psi^{(i)}_{\ul_{j,i},\ul'_{j,i}},
\end{equation}
with
\begin{equation}\label{eq: bPsi var}
    \Psi^{(i)}_{\ul_{j,i},\ul'_{j,i}} =
    \begin{cases}
        \dfrac{t^{(i)}_{\ul_{j,i}+1}-t^{(i)}_{\ul_{j,i}}}{3}     		& \text{if } \ul_{j,i}=\ul'_{j,i} = 1,  \\
        \dfrac{t^{(i)}_{\ul_{j,i}+1}-t^{(i)}_{\ul_{j,i}-1}}{3}    		& \text{if } 2\leq\ul_{j,i}=\ul'_{j,i}\leq m^{(i)}-1, \\
        \dfrac{t^{(i)}_{\ul_{j,k}}-t^{(i)}_{\ul_{j,i}-1}}{3}      		& \text{if } \ul_{j,i}=\ul'_{j,i} = m^{(i)},  \\
        \dfrac{\big|t^{(i)}_{\ul_{j,i}}-t^{(i)}_{\ul'_{j,i}}\big|}{6}   & \text{if } |\ul_{j,i}-\ul'_{j,i}|=1,  \\
        0 															 	& \text{if } |\ul_{j,i}-\ul'_{j,i}|>1.				
    \end{cases} 
\end{equation}
The expressions in \eqref{eq: bPsi var} correspond to the Gram matrices of univariate hat basis functions and can be found for instance in \cite{Bachoc2022MaxMod}. 
We finally get the result,

\begin{equation}\label{eq:sum1}
S_1 = 
\sum_{j=1}^B
\int_{[0,1]^{\blockset_j}}(\widehat{Y}_{j}-\widehat{Y}^\star_{j})^2 \, d\lambda = 
\sum_{j=1}^B
\sum_{\ul_j, \ul'_j \in \setL^\star_j} \beeta_{j,\ul_j}\beeta_{j,\ul'_j}\bPsi^{j}_{\ul_j,\ul'_j} =
\beeta^\top \bPsi \beeta,
\end{equation}
writing $\bPsi$ as the $|\setL^\star|$-dimensional matrix and $\beeta$ as the $|\setL^\star|$-dimensional vector
\begin{equation}\label{eq:Psi matrix}
\bPsi=
\begin{bNiceArray}{ccc}[margin] 
\bPsi^{1} &\Block{2-2}{0}  \\
\Block{2-2}{0} & \ddots \\
&& \bPsi^\sizeblock
\end{bNiceArray}, \hspace{1cm}
\beeta=
\begin{bNiceArray}{c}[margin] 
\bxi'_1 - \bxi_1 \\
\vdots\\
\bxi'_\sizeblock - \bxi_\sizeblock
\end{bNiceArray}.
\end{equation}

\begin{remark}
The computational cost of $S_1$ in \eqref{eq:sum1} is linear with respect to the dimension $|\setL^\star|$. 
Indeed,
for each $\ul_j \in \setL^\star_j$, there are at most $3^{|\blockset^\star_j|}$ multi-indices $\ul'_j \in \setL^\star_j$ such that $\bPsi^{j}_{\ul,\ul'}\neq 0$.
This is because, from Equations~\eqref{eq: bPsi block} and \eqref{eq: bPsi var}, for $\ul_j \in \setL^\star_j$ , it is easy to see that $\bPsi^{j}_{\ul_j,\ul'_j}\neq 0$ implies that $||\ul_j - \ul_j'||_{\infty}\leq 1$. 
Since $\ul_j$ and $\ul'_j$ both lie in $\Z^{|\blockset^\star_j|}$, the number of values that $\ul'_j$ can take such that $\bPsi^{j}_{\ul_j,\ul'_j}\neq 0$  is bounded by $3^{|\blockset^\star_j|}$.
Hence, the number of non-zero terms in $\displaystyle{\sum_{\ul_j, \ul'_j \in \setL^\star_j} \beeta_{j,\ul_j}\beeta_{j,\ul'_j}\bPsi^{j}_{\ul_j,\ul'_j}}$ in \eqref{eq:sum1} is bounded by $|\setL^\star_j|3^{|\blockset^\star_j|}$. 
\end{remark}

\subsubsection{Computation of $\mathbf{S_2}$}\label{sec:secondsum}

We define the $|\setL^\star|$-dimensional vector $\bE$,
\begin{equation}\label{eq:Evect}
\bE\in \R^{|\setL^\star|}, \hspace{1cm} \bE=(\bE_{j,\ul_j})_{1\leq j\leq B, \,\ul_j \in \setL^\star_j},
\end{equation}
with
\begin{equation}\label{eq:E_ul}
\bE_{\ul_j}=\int_{[0,1]^{\blockset^\star_j}} \phi^{\star}_{\ul_j} \, d\lambda = 
\int_{[0,1]^{\blockset^\star_j}} \prod_{i \in \blockset_j}\phi^{s^{\star(i)}}_{\ul_{j,i}}\, d\lambda =
\prod_{i \in \blockset^\star_j} \bE_{j,\ul_{j,i}},
\end{equation}
and $\bE_{j,\ul_{j,i}}=\int_{0}^1\phi^{s^{\star(i)}}_{\ul_{j,i}}\, d \lambda$. Then we can easily compute (see for instance \cite{Bachoc2022MaxMod})

\begin{equation}
    \bE_{j,\ul_{j,i}} = 
    \begin{cases}
        \frac{1}{2} (t^{(i)}_{\ul_{j,i}+1}-t^{(i)}_{\ul_{j,i}})  & \text{if }   \ul_{j,i} = 1, \\
        \frac{1}{2} (t^{(i)}_{\ul_{j,i}+1}-t^{(i)}_{\ul_{j,i}-1})& \text{if }   2\leq \ul_{j,i}\leq m^{\star(i)}-1, \\
        \frac{1}{2} (t^{(i)}_{\ul_{j,i}}-t^{(i)}_{\ul_{j,i}-1})  & \text{if }   \ul_{j,i} = m^{\star(i)}.
    \end{cases}
\end{equation}
Then, taking $\widehat{\bxi}'$ from Corollary~\ref{corol:xi'} we can write, 
\[
\rho_j :=
\int_{[0,1]^{\blockset^\star_j}}(\widehat{Y}_{j}-\widehat{Y}^\star_{j})\, d\lambda=
\int_{[0,1]^{\blockset^\star_j}}\sum_{\ul_j \in \setL^\star_j}\phi^\star_{\ul_j}\big(\widehat{\bxi}'_{j,\ul_j}-\widehat{\bxi}^\star_{j,\ul_j}\big) \, d\lambda = 
\sum_{\ul_j \in \setL^\star_j} \beeta_{j,\ul_j} \bE_{\ul_j}
=
\beeta_j^\top \bE_j.
\]
\\
This gives
\[
S_2
=
2\sum_{1\leq i<j\leq \sizeblock}
\rho_i \rho_j
=
\left( \sum_{i=1}^B \rho_i
\right)^2
- 
\sum_{i=1}^B \rho_i^2
=
(\beeta^\top \bE)^2- \sum_{1\leq j \leq B}  \left(\beeta_j^\top \bE_j\right)^2.
\]
This gives an expression of $S_2$ that has a linear computational cost with respect to $B$.
Gathering the expressions of $S_1$ and $S_2$, together with \eqref{eq:square norm criteria developped}    concludes the proof of Proposition \ref{prop:algebraic expression}.


\section{Covariance parameters estimation} \label{appendix:Kernel Hyperparameters selection}

Here we consider a fixed partition $\partition=\{\blockset_1,\ldots, \blockset_\sizeblock\}$ and fixed subdivisions $\subdivset=\{s^{(1)},\ldots, s^{(\sizeblock)}\}$. We keep notations $X^{(i)}, X^{\blockset_j} ,X$ introduced in Section \ref{subsubsec:MultiDimensionalBasis}. 
We consider a parametric family of covariance functions
for the block-additive model \eqref{eq:BAGP}, given by \eqref{eq:kernlBAGP}. Within each block, we choose to tensorize univariate covariance functions. Formally, we let 
$$
    \kernl_{\partition,\theta}(\boldsymbol{x}, \boldsymbol{x}') = \sum_{j =1}^B
    \prod_{i\in \blockset_j} \kernl_{\theta^{(i)}}(x_i,x'_i),
$$
for $\boldsymbol{x},\boldsymbol{x}' \in X$ and $\theta=(\theta^{(1)},\ldots,\theta^{(D)})$, where for each $i$, $\kernl_{\theta^{(i)}}$ is a covariance function on $X^{(i)} \times X^{(i)}$ and $\Theta^{(i)} \subseteq \R^{q_i}$ for some $q_i \in \mathbb{N}$.

Then, we consider standard maximum likelihood estimation for the finite-dimensional GP $\GPSP$ in \eqref{eq:GPSP} with noisy observations, see \cite{LopezLopera2017FiniteGPlinear}. 
%
%
Formally  we consider the finite-dimensional covariance function in \eqref{eq:kernlGPSP} that yields the finite-dimensional covariance matrix $\MatKtheta=\bPhi(\bX)^\top \widetilde{k}_{\theta}(\bX,\bX)\bPhi(\bX)$ of $\GPSP(\bX)$. The noisy observation vector $\bY =\GPSP(\bX)+\bepsilon$ is Gaussian $\mathcal{N}(
\boldsymbol{0} , \MatKtheta+\tau^2 \bI_n)$ and the associated likelihood is given by
\begin{equation}\label{eq:maxikelihood}
L(\theta, \tau ; \bY)  = 
\frac{1}{(2\pi)^{n/2}|\MatKtheta+\tau^2 \bI_n|^{1/2}} \exp\left(-\frac{1}{2}\bY^{\top}(\MatKtheta+\tau^2 \bI_n)^{-1}\bY\right).
\end{equation}
Numerical improvements can be used for computing the inverse and determinant of $\MatKtheta +\tau^2 \bI_n$, using the techniques of Section~\ref{subsec:Interpolation constraints and computation cost}.
Maximizing the likelihood over $(\theta,\tau)$ is equivalent to solving the optimization problem:
\[
\underset{
\substack{
\theta \in \Theta \\
\tau \in (0,\infty)
}}{\min}
~ ~
\log(|\MatKtheta+\tau^2 \bI_n|)+\bY^{\top}(\MatKtheta+\tau^2 \bI_n)^{-1}\bY.
\]
To simplify its numerical resolution, we can provide the gradient which is given explicitly, see for instance \cite{Rasmussen2005GP} [Chap 5.4]:
\begin{equation}\label{eq:hyperparam theta diff}
\frac{\partial L(\theta, \tau ; \bY)}{\partial \theta_{j,\ell}} = 
- \bY^\top (\MatKtheta+\tau^2\bI_n)^{-1}\frac{\partial \MatKtheta}{\partial \theta_{j,\ell}}(\MatKtheta+\tau^2\bI_n)^{-1}\bY + \trace\left((\MatKtheta+\tau^2\bI_n)^{-1}\frac{\partial \MatKtheta}{\partial \theta_{j,\ell}}\right),
\end{equation}
and
\begin{equation}\label{hyperparam tau diff}
\frac{\partial L(\theta, \tau ; \bY)}{\partial \tau^2} = 
 \bY^\top (\MatKtheta+\tau^2\bI_n)^{-2}\bY +  \trace\left((\MatKtheta+\tau^2\bI_n)^{-1}\right),
\end{equation}
where for $j=1,\ldots,D$, $\theta_{j,1},\ldots,\theta_{j,q_j}$ are the components of $\theta_j$.


\section{Evolution of the square norm over iterations}\label{appendix:SEevolution}

Figure~\ref{fig:toolCriteria} suggests that the MSE score can occasionally increase over some of the MaxMod iterations. Here we show that this behavior is not caused by numerical issues, by providing theoretical examples where it occurs. We provide these theoretical examples in the unconstrained case, for simplicity, relying on the explicit expression of the mode in this case, which coincides with the usual conditional mean of GPs.

Let $X = [0,1]$.
Consider two finite vector subspaces $E_1$ and $E_2$ of $\continuous(X,\R)$ the realisation space of our GP $\{Y(x), \, x \in X\}$, satisfying $E_1 \subset E_2$. Suppose as well that $Y$ is a zero-mean GP with kernel $k$. We have two projections $P_1: \continuous(X,\R) \to E_1$ and $P_2: \continuous(X,\R) \to E_2$ such that $P_1\circ P_2= P_1$. We now set the two GPs $\widetilde{Y}_1 = P_1(Y)$ and $\widetilde{Y}_2 = P_2(Y)$. We set our observations to be $(\bX,\bY)$. One may think that the function $\widehat{f}_2(\cdot)=E(\widetilde{Y}_2(\cdot)|\widetilde{Y}_2(\bX)+\epsilon=\bY)$ better interpolates the observations $\bY$ than the function $\widehat{f}_1(\cdot)=E(\widetilde{Y}_1(\cdot)|\widetilde{Y}_1(\bX)+\epsilon=\bY)$ ($\epsilon$ being a Gaussian white noise of variance $\tau^2$). Indeed, $\widetilde{Y}_2$ lives in a larger vector space than $\widetilde{Y}_1$. However, as already mentioned, Figure~\ref{fig:toolCriteria} shows some occasional increments of the square norm
from $\widetilde{Y}_1$ to $\widetilde{Y}_2$.
To interpret these increments, it is convenient to recall that for $i=1,2$ the conditional mean function $\widehat{f}_i$ can also be defined as the solution of a minimization problem in the RKHS $\m{H}_i$ with kernel $\widetilde{k}_i$:
\begin{equation}\label{eq:minimization problem}
\widehat{f}_i= 
\underset{f \in \m{H}_i}{\mathrm{argmin}} \left\|f(\bX)-\bY\right\|^2 +\tau^2\left\|f\right\|^2_{\m{H}_i}.
\end{equation}
Here $\widetilde{k}_i$ is the (finite-dimensional) kernel of $\widetilde{Y}_i$. Notice  that $\m{H}_1 \subseteq \m{H}_2$ since $E_1 \subseteq E_2$. Hence, we have 
\[
\underset{f \in \m{H}_2}{\min} \left\|f(\bX)-\bY\right\|^2
\le 
\underset{f \in \m{H}_1}{\min} \left\|f(\bX)-\bY\right\|^2.
\]
However, due to the second term in \eqref{eq:minimization problem}, we can have 
\begin{equation}\label{eq:inequality counter example}
||\widehat{f}_1(\bX)-\bY||^2 < ||\widehat{f}_2(\bX)-\bY||^2.
\end{equation}
We now give an explicit example where this happens. We will find two hat basis $\basis_1$ and $\basis_2$ such that $E_1=\operatorname{span} \basis_1 \subset E_2 = \operatorname{span} \basis_2$ and such that \eqref{eq:inequality counter example} holds.

Note that Section \ref{subsec:Interpolation constraints and computation cost} provides an explicit expression of the function $\widehat{f}_i$ and thus we have
\[
\left\|\widehat{f}_i(\bX)-\bY\right\|^2= \left\| \left(\widetilde{k}_i(\bX,\bX) \Big[\widetilde{k}_i(\bX,\bX) + \tau^2 \bI_n\Big]^{-1} - I_n \right)\bY\right\|^2.
\]
To obtain that construction, we first show Lemma \ref{lem:IneqCond} that will be useful in the following developments.

\begin{lemma}\label{lem:IneqCond}
Let $\bA$ and $\bB$ be two symmetric $n \times n$ matrices. If the matrix $\bB-\bA$ has one strictly positive eigenvalue $\lambda$ with an associated unit eigenvector $\be_{\lambda}$, then:
\[
\left\| (\bA[\bA+\gamma \bI_n]^{-1} - \bI_n)\be_\lambda \right\| > \left\| (\bB[\bB+\gamma \bI_n]^{-1} - \bI_n)\be_\lambda\right\|
\]
holds when $\gamma$ is large enough.
\end{lemma}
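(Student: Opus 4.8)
The plan is to reduce both sides to a transparent closed form and then compare them asymptotically as $\gamma \to \infty$. The starting observation is the algebraic identity, valid for any symmetric $M$ and any $\gamma$ large enough that $M + \gamma\bI_n$ is invertible,
\[
M[M+\gamma\bI_n]^{-1} - \bI_n = -\gamma(M+\gamma\bI_n)^{-1},
\]
which is obtained by writing $M = (M+\gamma\bI_n) - \gamma\bI_n$. Applying this to both $\bA$ and $\bB$ and using that $\gamma > 0$, the factors of $\gamma$ cancel and the claimed inequality becomes equivalent to
\[
\left\|(\bA+\gamma\bI_n)^{-1}\be_\lambda\right\| > \left\|(\bB+\gamma\bI_n)^{-1}\be_\lambda\right\|.
\]

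Next I would expand each resolvent in a Neumann series in $1/\gamma$. For $\gamma$ larger than the spectral radius of $M$, one has $(M+\gamma\bI_n)^{-1} = \gamma^{-1}(\bI_n + M/\gamma)^{-1} = \gamma^{-1}\bI_n - \gamma^{-2}M + O(\gamma^{-3})$, so that
\[
\left\|(M+\gamma\bI_n)^{-1}\be_\lambda\right\|^2 = \gamma^{-2}\|\be_\lambda\|^2 - 2\gamma^{-3}\,\be_\lambda^\top M\be_\lambda + O(\gamma^{-4}).
\]
Since $\be_\lambda$ is a unit vector, the leading $\gamma^{-2}$ term equals $\gamma^{-2}$ for both $M=\bA$ and $M=\bB$ and hence cancels in the difference. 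Subtracting the two expansions therefore leaves
\[
\left\|(\bA+\gamma\bI_n)^{-1}\be_\lambda\right\|^2 - \left\|(\bB+\gamma\bI_n)^{-1}\be_\lambda\right\|^2 = \frac{2}{\gamma^3}\,\be_\lambda^\top(\bB-\bA)\be_\lambda + O(\gamma^{-4}).
\]

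The conclusion then follows from the hypothesis: because $\be_\lambda$ is a unit eigenvector of $\bB-\bA$ for the eigenvalue $\lambda>0$, we have $\be_\lambda^\top(\bB-\bA)\be_\lambda = \lambda > 0$, so the leading coefficient of the difference is $2\lambda/\gamma^3 > 0$. As this term dominates the $O(\gamma^{-4})$ remainder, the difference of squared norms is strictly positive for all $\gamma$ large enough, which after taking square roots is precisely the claimed inequality. The one point requiring care is this asymptotic-dominance step: one must verify that the remainder is genuinely $O(\gamma^{-4})$ with a constant that does not depend on $\gamma$, which follows from the uniform convergence of the Neumann series once $\gamma$ exceeds the spectral radii of both $\bA$ and $\bB$. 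This is the main (though essentially routine) obstacle, the remainder of the argument being direct algebra.
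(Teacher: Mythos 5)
Your proof is correct and follows essentially the same route as the paper: an expansion of the resolvent in powers of $1/\gamma$ whose cross term produces $2\,\be_\lambda^\top(\bB-\bA)\be_\lambda = 2\lambda$ as the leading coefficient of the difference of squared norms. The preliminary identity $M[M+\gamma\bI_n]^{-1}-\bI_n=-\gamma(M+\gamma\bI_n)^{-1}$ is a pleasant exact simplification that the paper does not use, but it only tidies the bookkeeping rather than changing the argument.
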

\begin{proof}
We can rewrite $[\bA+\gamma \bI_n]^{-1}= \gamma^{-1} [\bI_n + \frac{\bA}{\gamma}]^{-1}$. Then, as $\gamma \to \infty$,
\[
[\bA+\gamma \bI_n]^{-1}= \gamma^{-1} \left(\bI_n -\frac{\bA}{\gamma} + o\left(\frac{ 1 }{\gamma}  \right)\right),
\]
and again 
\[
\bA[\bA+\gamma \bI_n]^{-1} - \bI_n
=
\frac{\bA}{\gamma} - \bI_n + o\left(\frac{ 1 }{\gamma}  \right).
\]
Note that the same expression holds for $\bB$. These expressions provide the following equalities:
\begin{eqnarray*}
\left\| (\bA[\bA+\gamma \bI_n]^{-1} - \bI_n)\be_\lambda \right\|^2 -
\left\| (\bB[\bB+\gamma \bI_n]^{-1} - \bI_n)\be_\lambda\right\|^2 
&=& \\
\left\| \left(\frac{\bA}{\gamma}-\bI_n + o\left(\frac{ 1 }{\gamma}  \right)\right)\be_\lambda\right\|^2 - \left\|\left(\frac{\bB}{\gamma}-\bI_n + o\left(\frac{ 1 }{\gamma}  \right)\right)\be_\lambda \right\|^2
&=&
\frac{2}{\gamma} \langle \be_\lambda , (\bB-\bA)\be_\lambda \rangle + o(\gamma^{-1})\\
&=&
\frac{2 \lambda}{\gamma}
+ o(\gamma^{-1}),
\end{eqnarray*}
concluding the proof.
\end{proof}

We do now have a way of constructing our inequality. Taking $\basis_1=(\phi_1)=(\widehat{\phi}_{0,0.5,1})$ and $\basis_2 = (\phi'_1, \phi'_2)= (\widehat{\phi}_{0,0.5,0.5+\epsilon}, \widehat{\phi}_{0.5,0.5+\epsilon,1})$, we have $\phi_1 = \phi'_1 + (1-2\epsilon)\phi'_2 $.
Indeed, since $\phi_1, \phi'_1,\phi'_2$ are piecewise linear vanishing at $0,1$ it is sufficient to check the equality at the knots $0.5$ and $0.5+\epsilon$. 
In particular, we can express $\phi_1$ in the basis $\basis_2$ and thus $E_1 \subseteq E_2$.
Then, taking $\bX=(x_1, x_2)= (0.5, 0.5+\epsilon)$ gives
\[
\Phi_2(\bX)^\top=
\begin{bNiceArray}{cc}[margin] 
\phi'_1(x_1) & \phi'_2(x_1) \\
\phi'_1(x_2) & \phi'_2(x_2) 
\end{bNiceArray} = I_2.
\]
From what we said 
\[
\Phi_1(\bX)^\top=
\begin{bNiceArray}{c}[margin] 
\phi_1(x_1)  \\
\phi_1(x_2)
\end{bNiceArray} =
\begin{bNiceArray}{c}[margin] 
\phi'_1(x_1) +  (1-2\epsilon)\phi'_2(x_1) \\
\phi'_1(x_2) +  (1-2\epsilon)\phi'_2(x_2) \\
\end{bNiceArray} = 
\begin{bNiceArray}{c}[margin] 
1 \\
1-2\epsilon \\
\end{bNiceArray}. 
\]
We can then express $\widetilde{k}_1(\bX,\bX)$:
\[
\widetilde{k}_1(\bX,\bX)=
\begin{bNiceArray}{c}[margin] 
1 \\
1-2\epsilon 
\end{bNiceArray}
\kernl (\bX,\bX)
\begin{bNiceArray}{cc}[margin] 
1 &  1-2\epsilon 
\end{bNiceArray}.
\]

Finally we want to show that the matrix $\widetilde{k}_1(\bX,\bX)-\widetilde{k}_2(\bX,\bX)$ has some strictly positive eigenvalues:
\[
\widetilde{k}_1(\bX,\bX)-\widetilde{k}_2(\bX,\bX)=
\begin{bNiceArray}{c}[margin] 
1 \\
1-2\epsilon 
\end{bNiceArray}
\kernl (\bX,\bX)
\begin{bNiceArray}{cc}[margin] 
1 &  1-2\epsilon 
\end{bNiceArray}
-\kernl (\bX,\bX).
\]
If $\kernl (\bX,\bX) = I_2$ and $\epsilon=0$ the matrix $\widetilde{k}_1(\bX,\bX)-\widetilde{k}_2(\bX,\bX)$ has for eigenvalues $\{-1,1\}$ thus there is one strictly positive eigenvalue. By continuity of the largest eigenvalue for symmetric matrices there exists $\epsilon>0$ and a kernel $k$ such that the above matrix has strictly positive eigenvalues. This constructs the counter example we were looking for by applying Lemma \ref{lem:IneqCond} with $A = \widetilde{k}_2(\bX,\bX)$, $B = \widetilde{k}_1(\bX,\bX)$, $\gamma = \tau^2$ and $\bY = \be_{\lambda}$.

\section{Change of basis: A generalisation}
\label{ann:ChangeBasisGenera}

We focus here on the generalization of the Lemma~\ref{lem:changebasis}. Two pairs of blocks and subdivsions $(\subpartition,\subdivset),(\subpartition^\star, \subdivset^\star)$ provide two bases $\basisSP, \basisSPstar$ defined in \eqref{eq:basisSP}. We provide necessary and sufficient conditions to be able to express any element in $\basisSP$ in $\basisstar$. In other words, we provide necessary and sufficient condition so that $E^\subdivset_\subpartition \subset E^{\subdivset^\star}_{\subpartition^\star}$.

\begin{lemma}[Change of basis from  $\basis^{\subdivset}_{\subpartition}$ to $\basis^{\subdivset^\star}_{\subpartition^\star}$ ]\label{prop:base inclusion}
Let $\subdivset=(s^{(1)},\ldots, \, s^{(D)})$ and $\subdivset^{\star}=(s^{\star(1)},\ldots, \, s^{\star(D)})$ be two subdivisions with associated subpartition $\subpartition=\{\blockset_1,\ldots, \blockset_\sizeblock\}$ and $\subpartition^\star=\{\blockset^{\star}_1,\ldots,\blockset^{\star}_{\sizeblock^{\star}}\}$, let $E^{\subdivset}_{\subpartition}$ and $E^{\subdivset^\star}_{\subpartition^\star}$ be the the vector spaces defined in \eqref{eq:spaceESP:setLSP}.\\
Then, $E^{\subdivset}_{\subpartition}\subset E^{\subdivset^\star}_{\subpartition^\star}$ if and 
only if the two following conditions are satisfied: 
\begin{itemize}
\item[(i)] \textbf{subdivision inclusion}: For each $i \in \bigcup_{j=1}^{\sizeblock} \blockset_j$, we have $s^{(i)}\subset s^{\star(i)}$. 
\item[(ii)] \textbf{subpartitions inclusion}: For each $\blockset_j$ block set in $\subpartition$, there exists $j^{\star}$ such $\blockset_j \subset \blockset^{\star}_{j^\star}$. 
\end{itemize} 
Thus there is an algorithm
giving the change of basis matrix $P_{\basis^\subdivset_{\subpartition},\basis^{\subdivset^\star}_{\subpartition^\star}}$ where bases $\basis^\subdivset_{\subpartition}$ and $\basis^{\subdivset^\star}_{\subpartition^\star}$ are defined in Section \ref{subsubsec:MultiDimensionalBasis}.  
\end{lemma}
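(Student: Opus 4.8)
The plan is to prove the two implications separately and to read off the change-of-basis algorithm from the constructive sufficiency argument. Throughout I use that every element of $\spaceESP$ is a sum $\sum_{j=1}^{\sizeblock} f_j$ with $f_j \in \continuous(X^{\blockset_j},\R)$ piecewise multilinear, having breakpoints in a variable $i \in \blockset_j$ only at the knots of $s^{(i)}$, and I test the inclusion generator by generator.

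For sufficiency, assume (i) and (ii); it suffices to show each generating function $\phi_{\ul_j}$ lies in $E^{\subdivset^\star}_{\subpartition^\star}$. By (ii) fix $j^\star$ with $\blockset_j \subset \blockset^\star_{j^\star}$. By (i), for each $i \in \blockset_j$ we have $s^{(i)} \subset s^{\star(i)}$, so the one-dimensional factor $\phi^{s^{(i)}}_{\ul_{j,i}}$, being piecewise affine on the coarser subdivision, is piecewise affine on the finer one; evaluating at the knots $t^{\star(i)}_k$ of $s^{\star(i)}$ gives $\phi^{s^{(i)}}_{\ul_{j,i}} = \sum_{k} \phi^{s^{(i)}}_{\ul_{j,i}}(t^{\star(i)}_k)\,\phi^{s^{\star(i)}}_{k}$, exactly as in the \textbf{Refine} case of Lemma~\ref{lem:changebasis}. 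Using the partition-of-unity identity $\sum_k \phi^{s^{\star(i)}}_k \equiv 1$ for each extra variable $i \in \blockset^\star_{j^\star}\setminus \blockset_j$ (the \textbf{Merge} trick), I would write $\phi_{\ul_j} = \phi_{\ul_j}\cdot \prod_{i \in \blockset^\star_{j^\star}\setminus \blockset_j}\big(\sum_k \phi^{s^{\star(i)}}_k\big)$ and expand the tensor product to obtain $\phi_{\ul_j}$ as an explicit linear combination of the basis functions $\phi^\star_{\ul^\star}$, $\ul^\star \in \setL^{\subdivset^\star}_{\blockset^\star_{j^\star}}$. Collecting these coefficient vectors as columns yields the matrix $P_{\basisSP,\basisstar}$, and since the whole computation is finite and explicit, this also proves the concluding algorithmic assertion.

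For necessity, assume $\spaceESP \subset E^{\subdivset^\star}_{\subpartition^\star}$, and prove (ii) first. Fix a block $\blockset_j$ and the generator $\phi_{\ul_j}$, which is genuinely non-constant in each variable of $\blockset_j$; write $\phi_{\ul_j} = \sum_{j^\star} g_{j^\star}$ with $g_{j^\star}\in\continuous(X^{\blockset^\star_{j^\star}},\R)$. If some $i \in \blockset_j$ were inactive in $\subpartition^\star$, every $g_{j^\star}$ would be constant in $x_i$, contradicting that $\phi_{\ul_j}$ depends on $x_i$. Next, suppose $i_1,i_2\in\blockset_j$ lie in no common $\blockset^\star$ block; disjointness of the $\blockset^\star_{j^\star}$ lets me regroup the sum as $g_A + g_B + g_{\mathrm{rest}}$, where $g_A$ depends on $x_{i_1}$ but not $x_{i_2}$, $g_B$ on $x_{i_2}$ but not $x_{i_1}$, and $g_{\mathrm{rest}}$ on neither. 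The mixed first-order finite difference in $x_{i_1}$ and $x_{i_2}$ annihilates all three terms, hence $\phi_{\ul_j}$; but $\phi_{\ul_j}$ factors as $\phi^{s^{(i_1)}}_{\ul_{j,i_1}}(x_{i_1})\,\phi^{s^{(i_2)}}_{\ul_{j,i_2}}(x_{i_2})\,R$ with $R$ independent of $x_{i_1},x_{i_2}$, so that same difference factorizes into a nonzero product once the evaluation points are chosen so each univariate bracket is nonzero and the remaining coordinates sit at peaks so $R\neq 0$. This contradiction forces the variables of $\blockset_j$ to be pairwise — hence jointly, by disjointness — inside a single $\blockset^\star_{j^\star}$, which is (ii). For (i), fix $i \in \blockset_j$; by (ii) the variable $i$ is active in $\subpartition^\star$, so $s^{\star(i)}$ contains $0,1$ and it remains to place each interior knot $t^{(i)}_\ell$. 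I would use that restricting any element of $E^{\subdivset^\star}_{\subpartition^\star}$ to a line parallel to the $x_i$-axis yields a univariate function that is piecewise affine with breakpoints contained in $s^{\star(i)}$. Restricting the generator $\phi_{\ul_j}$ with $\ul_{j,i}=\ell$ to the line through the peaks of its other one-dimensional factors reduces it to $x_i\mapsto\phi^{s^{(i)}}_\ell(x_i)$, which has a genuine kink at the interior knot $t^{(i)}_\ell$; this kink must be a breakpoint of the restriction, so $t^{(i)}_\ell\in s^{\star(i)}$. Ranging over all interior knots gives $s^{(i)}\subset s^{\star(i)}$, which is (i).

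The sufficiency direction is essentially a bookkeeping extension of the \textbf{Refine} and \textbf{Merge} cases of Lemma~\ref{lem:changebasis}, so the substance lies in necessity, which I expect to be the main obstacle. The two delicate points there are turning a ``genuine interaction between $x_{i_1}$ and $x_{i_2}$'' into a rigorous obstruction — handled cleanly by the mixed finite-difference operator, which commutes across distinct variables and kills any function not depending on both — and the boundary bookkeeping in (i): the endpoints $0,1$ are knots of every subdivision and are never kink points, so only interior knots carry information, and one must verify that the half-hats at the two ends still produce genuine interior kinks. These are the steps I would write out in full.
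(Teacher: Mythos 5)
Your proposal is correct and follows essentially the same route as the paper's proof in Appendix \ref{ann:ChangeBasisGenera}: sufficiency is obtained constructively by combining the refine expansion (evaluating the coarse one-dimensional hats at the finer knots) with the partition-of-unity trick for the extra variables of the enlarged block, and necessity by exhibiting an interaction obstruction when a block is split and a knot obstruction when a subdivision point is missing. The only differences are cosmetic: for the subpartition condition the paper applies the mixed partial $\partial^{|\blockset_j|}/\partial\bx_{\blockset_j}$ (almost everywhere) to the monomial $\prod_{i\in\blockset_j}x_i$ whereas you apply pairwise mixed finite differences to the generator $\phi_{\ul_j}$ itself, which neatly sidesteps the almost-everywhere differentiability discussion; and for the subdivision condition the paper evaluates the putative interpolant at the missing knot to get a value strictly below $1$, whereas you detect the kink of the restricted hat function --- both are valid and of comparable length.
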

\begin{proof}
\textbf{( Sufficient condition $\Longrightarrow$)}\\
We present an algorithmic proof by simplifying the problem in stages. First, consider the case where there is only one variable, and that $s^{(1)}\subset s^{\star(1)}$. For any basis function $\phi \in \basis^{(1)}$, we can express it as a linear combination of functions in the basis $\basis^{\star(1)}$ as follows:
\[
\phi =
\sum_{k=1}^{m^{\star(1)}}\phi(t^{\star(1)}_k)\phi^{\star(1)}_{k}.
\]
This representation is intuitive since it projects the linear-by-parts function $\phi$ from the basis $\basis^{(1)}$ onto the linear-by-parts functional space spanned by $\basis^{\star(1)}$ which is more ``precise''.

\paragraph{Case 1: Refinement.} When the subpartitions are identical, for $1\leq j\leq \sizeblock$, we can express $\phi_{\ul_j}\in E^{S}_{\subpartition}$ in the vector space $E^{\subdivset^\star}_{\subpartition^\star}$ as follows:
\[
\phi_{\ul_j}=\prod_{i \in \blockset_j}\left(\sum_{k=1}^{m^{\star(i)}} \phi^{(i)}_{\ul_{j,i}}(t^{\star(i)}_{\ul_{j,k}})\phi^{\star(i)}_{k}\circ \Pi_i\right),
\]
here $\Pi_i$ is the canonical surjection $X\to X^{(i)}$.
By expanding the product, it becomes clear that $\phi_{\ul_j}$ belongs to $E^{\subdivset^\star}_{\subpartition^\star}$, and we can define the matrix of change of basis as $P_{\basis^\subdivset_{\subpartition},\basis^{\subdivset^\star}_{\subpartition^\star}}$.

\paragraph{Case 2: Activating/Merging.} Consider the case where the subpartition $\subpartition$ consists of blocks $\blockset_j$ such that $\blockset_j \subset \blockset^{\star}_{j^\star}$, and for every $i \in \blockset_j$, we have $s^{(i)}=s^{\star(i)}$. We observe that:
\[
\setL^{\subdivset^\star}_{\blockset^\star_{j^\star}}= \setL^{\subdivset}_{\blockset_j} \times \setL^{\subdivset^\star}_{\blockset^\star_{j^\star}\setminus \blockset_j},
\]
which allows us to express any element $\ul^\star_j \in \setL^{\subdivset^\star}_{\blockset_j}$ as $\ul^\star_j=(\ul_a,\ul_b)$, where $(\ul_a,\ul_b)\in \setL^{\subdivset}_{\blockset_j} \times \setL^{\subdivset^\star}_{\blockset^\star_j\setminus \blockset_j}$. Noticing that for every $i\in \blockset^\star_{j^\star}$, $\sum_{k=1}^{m^{\star(i)}}\phi^{\star(i)}_k =1$, we can express any basis function $\phi_{\ul_j} \in \basisSBj$ as 
\[
\phi_{\ul_j}=\prod_{i \in \blockset_j}\phi^{(i)}_{\ul_{j,i}}\circ\Pi_i \prod_{i\in\blockset^\star_{j^\star}\setminus\blockset_j}\left(\sum_{k=1}^{m^{\star(i)}}\phi^{\star(i)}_k\circ\Pi_i\right),
\]
again for every $i=1, \cdots, D$, $\Pi_i$ is the canonical surjection $X \to X^{(i)}$. The last equality, upon expansion of the last, yields:
\[
\phi_{\ul_j}=\sum_{\ul_b \in \setL^{\subdivset^\star}_{\blockset^\star_{j^\star}\setminus \blockset_j}} \phi^\star_{(\ul_j,\ul_b)}.
\]

\noindent \textbf{General case:} We can now reconstruct the change of basis matrix by constructing intermediate bases. \textbf{case 1} provides us with $P_{\basis^{\subdivset}_{\subpartition},\basis^{\subdivset^\star}_{\subpartition}}$. We can then apply \textbf{case 2} to obtain the matrix $P_{\basis^{\subdivset^\star}_{\subpartition},\basis^{\subdivset^\star}_{\subpartition^\star}}$. Finally, we have:
\[
P_{\basis^\subdivset_{\subpartition},\basis^{\subdivset^\star}_{\subpartition^\star}} = P_{\basis^{\subdivset^\star}_{\subpartition},\basis^{\subdivset^\star}_{\subpartition^\star}}P_{\basis^{\subdivset^\star}_{\subpartition},\basis^{\subdivset}_{\subpartition}}.
\]

\noindent \textbf{(Necessary condition $\Longleftarrow$)}\\

On the other way, let us consider that conditions are not met and reach a contradiction. \\
\textbf{Non subdivision inclusion:} There is $i\in \disjcup_{j=1}^{\sizeblock}\blockset_j$ such that $s^{(i)}\not\subset s^{\star(i)}$ it means that there is $t^{(i)}_{k}$ in $s^{(i)}$ which is not in $s^{\star(i)}$. By remarks made in \textbf{case 1.} we have that the function $\phi^{(i)}_k: \R \to \R$, $x \mapsto \phi^{(i)}_\kernl (x)$ is in the space $E^{\subdivset}_{\subpartition}$. It is clear it is not in the space $E^{\subdivset^\star}_{\subpartition^\star}$. Otherwise, by projection property would give: 
\[
\phi^{(i)}_k =\sum_{l=1}^{m^{\star(i)}} \phi^{(i)}_\kernl (t^{\star(i)}_l)\phi^{\star(i)}_{l}.
\]
However, as $t^{(i)}_k \in [0,1]$, for some $1 \leq l^\star\leq m^{\star(i)}$ the following inequality holds: $t^{\star(i)}_{l^\star}<t^{(i)}_k<t^{\star(i)}_{l^\star+1}.$ Thus 
\[
\Bigg(\sum_{l=1}^{m^{\star(i)}} \phi^{(i)}_\kernl (t^{\star(i)}_l)\phi^{\star(i)}_{l} \Bigg)(t^{(i)}_k) = \phi^{(i)}_\kernl (t^{\star(i)}_{l^\star})\phi^{\star(i)}_{l}(t^{(i)}_k) + \phi^{(i)}_\kernl (t^{\star(i)}_l)\phi^{\star(i)}_{l^\star+1}(t^{(i)}_k)< 1 =\phi^{(i)}_\kernl (t^{(i)}_k) .\]
It is now clear by construction of $E^{\subdivset}_{\subpartition}$ and $E^{\subdivset^\star}_{\subpartition^\star}$ that we do not have $E^{\subdivset}_{\subpartition} \subset E^{\subdivset^\star}_{\subpartition^\star}$.\\ 
\textbf{Non subpartition inclusion:} There exists a block $\blockset_j \in \subpartition$ such that there is no block $\blockset^\star_{j^\star} \in \subpartition^\star$ such that $\blockset_j \subset \blockset^\star_{j^\star}$. As the subdivisions $\partition$ and $\partition^\star$ define two space of additive-per-block functions, we have for all $f \in E^{\subdivset^\star}_{\subpartition^\star}$, we have 
\[
f(\bx)=f_1(\bx_{\blockset^\star_1}) + \cdots + f_j(\bx_{\blockset^\star_j}) + \cdots + f(\bx_{\blockset^\star_B}),
\]
the family of elements in $E^{\subdivset^\star}_{\subpartition^\star}$ are derivable almost everywhere as product of almost everywhere derivable functions. Defining the differential operator $\frac{\partial^{|\blockset_j|}}{\partial \bx_{\blockset_j}}= \prod_{i\in \blockset_{j}} \frac{\partial}{\partial x_i}$, hypothesis give that $\frac{\partial^{|\blockset_j|}}{\partial \bx_{\blockset_j}}f=0$ for every $f\in E^{\subdivset^\star}_{\subpartition^\star}$. However the function $\phi : (x_1,\ldots, x_D) \mapsto \prod_{i \in \blockset_j} x_i$ is in $E^{\subdivset}_{\subpartition}$ and satisfy $\frac{\partial^{|\blockset_j|}}{\partial \bx_{\blockset_j}}\phi=1$ hence could not belong in $E^{\subdivset^\star}_{\subpartition^\star}$, this concludes the proof.
\end{proof}

\section{Block-predictors and their applications in the coastal flooding case}

Recall that our target function $y$ satisfies:
\[y(\bx)=y_1(\bx_{\blockset_1}) +\cdots + y_{\sizeblock}(x_{\blockset_\sizeblock}),\]
and that the constructed predictor is $\widehat{Y} = \bPhi^\top_1\widehat{\bxi}_1 + \cdots + \bPhi^\top_\sizeblock\widehat{\bxi}_\sizeblock$ (if the right subpartition has been found). Then, up to an additive constant (see Remark~\ref{rem:blockdecomp}), we have access to the block-predictors $\widehat{y}_i= \bPhi^\top_i\widehat{\bxi}_i$ of the block-functions $y_i$. The study of these block-predictors can bring a new light in the understanding of the impact of the variables over the target function.

\subsection{Results for the toy function}
For the 6D toy example in Section~\ref{subsec:MaxMod}, we can compare the results obtained from MaxMod with the target function $y$ in \eqref{eq:toolexample2Dblockadditiv}. Since $y$ is a sum of 2-dimensional block-functions, we can visualize the block-functions and their predictors for each $j = 1, \ldots, D/2$ using 3-dimensional plots. 
To be able to compare the block-functions $y_i$ with the block-predictors $\widehat{y}_i$, we plot the centered versions of these functions: $y_{c,i} = y_i-\int y_i$ and $\widehat{y}_{c,i} =  \widehat{y}_i -\int \widehat{y}_i$. After 17 iterations of MaxMod, the resulting predictor is defined in a finite-dimensional space of size $39$. The results, shown in Figure~\ref{fig:2Dplottoyfun}, are visually accurate, despite the predictors being piecewise linear approximations of the ground truth functions (see, e.g., the predictor of the $\arctan$ function). 
\begin{figure}[t!]
    \centering
    \includegraphics[width=0.45\linewidth]{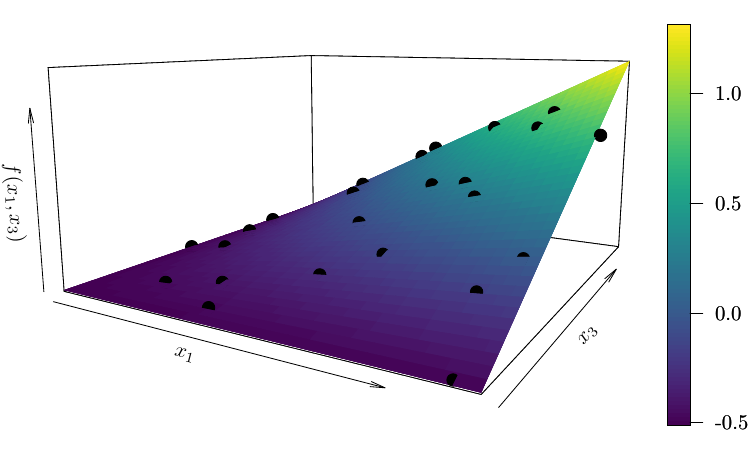}
    \includegraphics[width=0.45\linewidth]{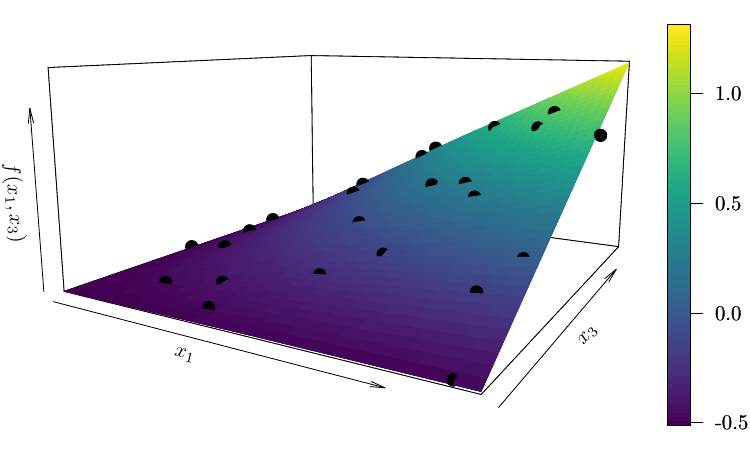}
    \includegraphics[width=0.45\linewidth]{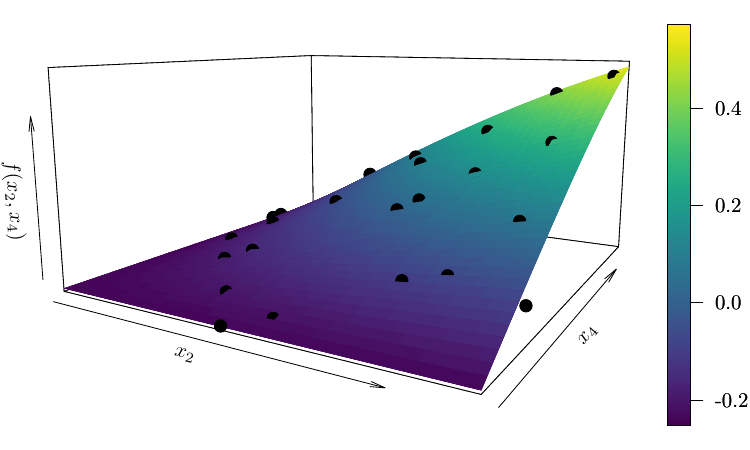}
    \includegraphics[width=0.45\linewidth]{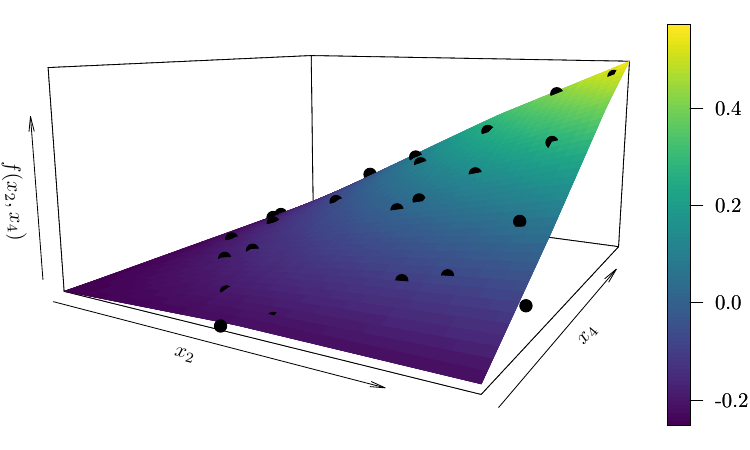}
    \includegraphics[width=0.45\linewidth]{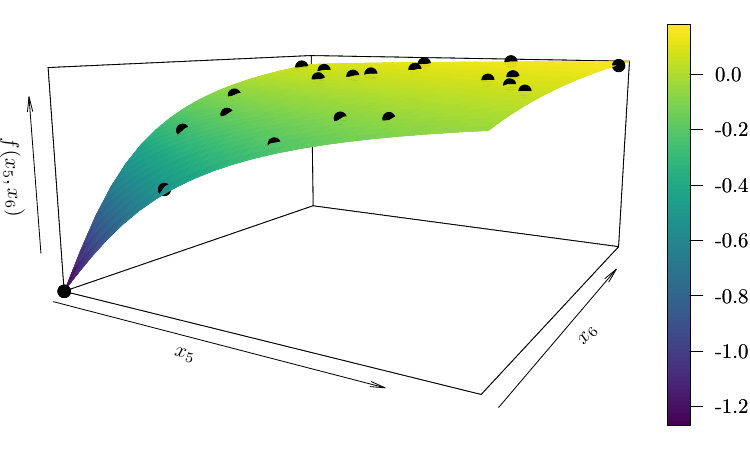}
    \includegraphics[width=0.45\linewidth]{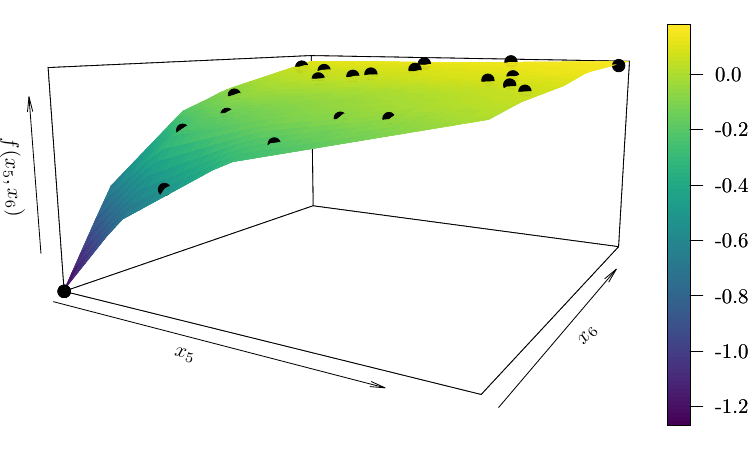}
    \caption{2D visualizations of the centered functions (top) $y_1:(x_1,x_3) \mapsto 2x_1x_3$, (middle) $y_2:(x_2,x_4) \mapsto \sin(x_2x_4)$ and (bottom) $y_3:(x_5,x_6) \mapsto \tan(3x_5+5x_6)$. The ground truth functions and their corresponding predictors are shown in the left and right panels, respectively.}
    \label{fig:2Dplottoyfun}
\end{figure}

\subsection{Analysis for the coastal flooding application}
\label{appendix:Analysis for the coastal flooding application}

As discussed in Section~\ref{subsec:coastalAPP}, 
the inferred additive structure of the target function $y := \log_{10}(A_{flood})$ is given by $\widehat{y}(S,T,\phi, t_{+},t_{-})= \widehat{y}_1(S,T,\phi) + \widehat{y}_2(t_{+}) + \widehat{y}_3(t_{-})$. 
Figure~\ref{fig:2Dplotcostalfun2} illustrates that, for a fixed $\phi$, the function $(T, S, \phi) \mapsto \widehat{y}_1(T, S, \phi)$ is quasi-linear. Specifically, the contour lines for small values of $S$ and $T$ are evenly spaced straight lines, indicating that $\widehat{y}_1(\cdot,\cdot,\phi)$ approximately behaves as a linear function. However, non-linear interactions are observed only for high values of $T$ and $S$. Independently of the value of $\phi$, the vertical orientation of the contour lines highlights that the variable $T$ has a greater influence on coastal flooding than $S$. This is consistent with the Sobol analysis shown in Figure~\ref{fig:surgeTide_graphANOVA}. It can also be observed that the influence of the tide $T$ on coastal flooding increases as $\phi$ decreases. This suggests that coastal flooding is more sensitive to the tide when it is synchronized with the surge. Conversely, the influence of the surge peak $S$ on coastal flooding does not appear to increase as $\phi$ decreases. These observations are intuitive, given that the range of the tide $T$ is broader than that of the surge $S$ prior to renormalization, as shown in Figure~\ref{fig:surgeTide_graphANOVA}.

\begin{figure}
\centering
\includegraphics[width=0.45\linewidth]{figs/BRGM_2dplot1.pdf}
\includegraphics[width=0.45\linewidth]{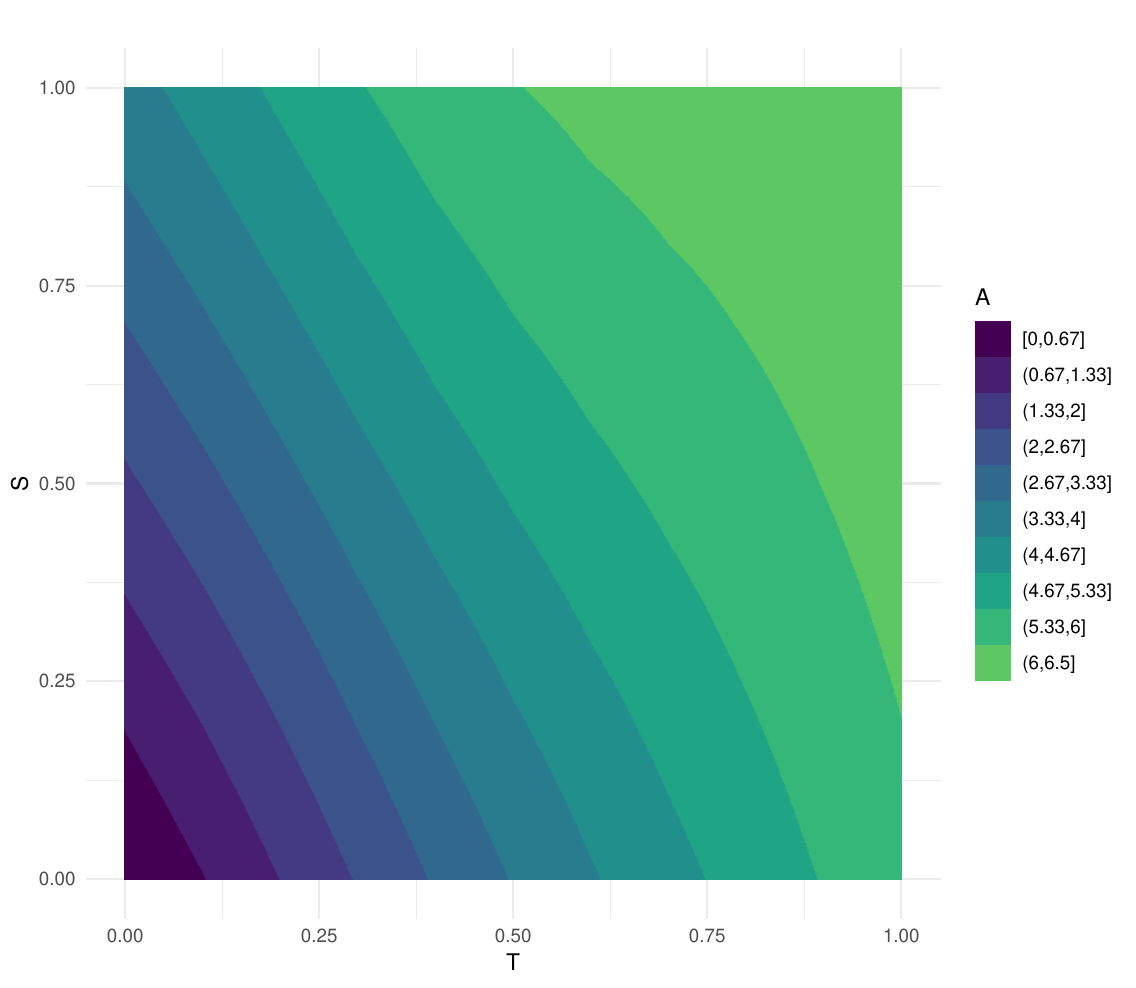}

\includegraphics[width=0.45\linewidth]{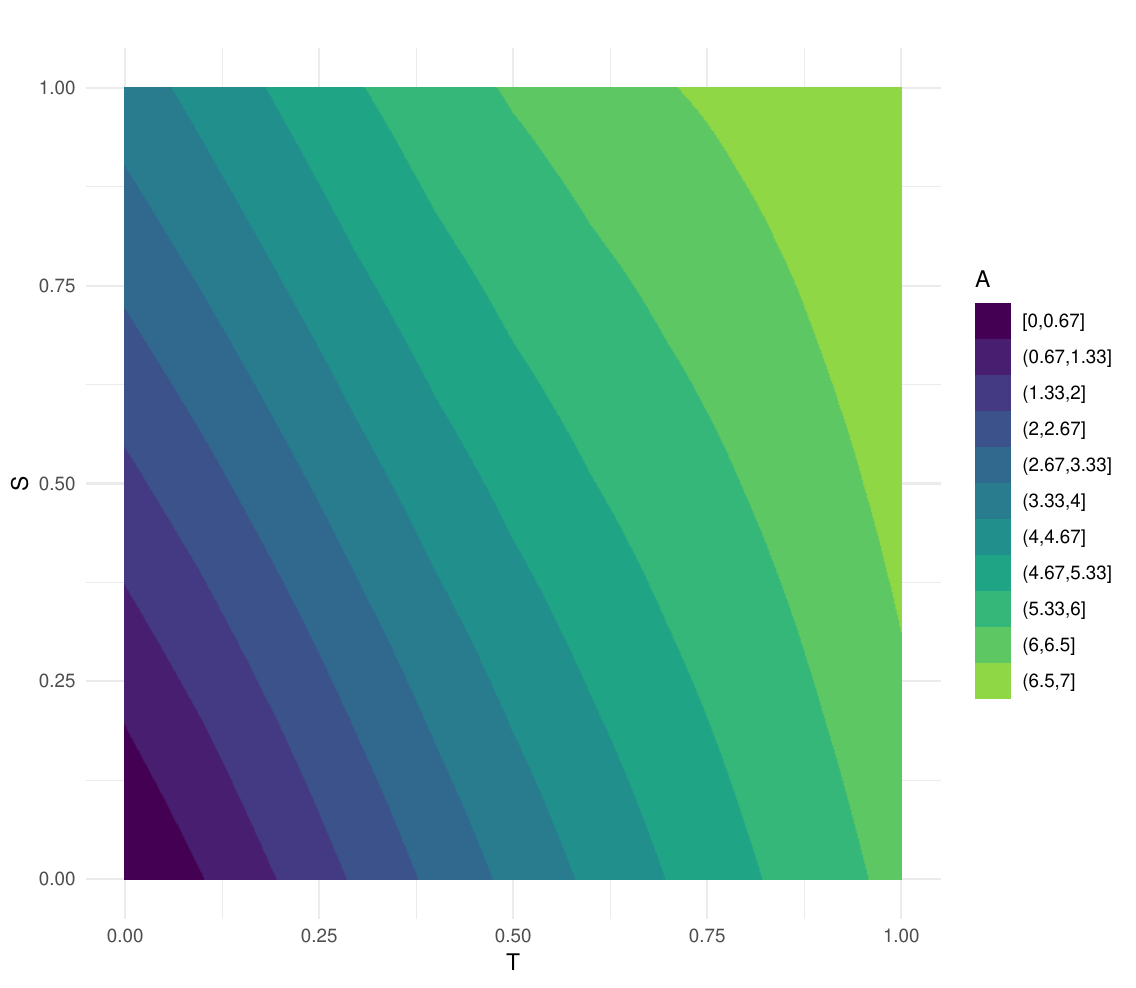}
\includegraphics[width=0.45\linewidth]{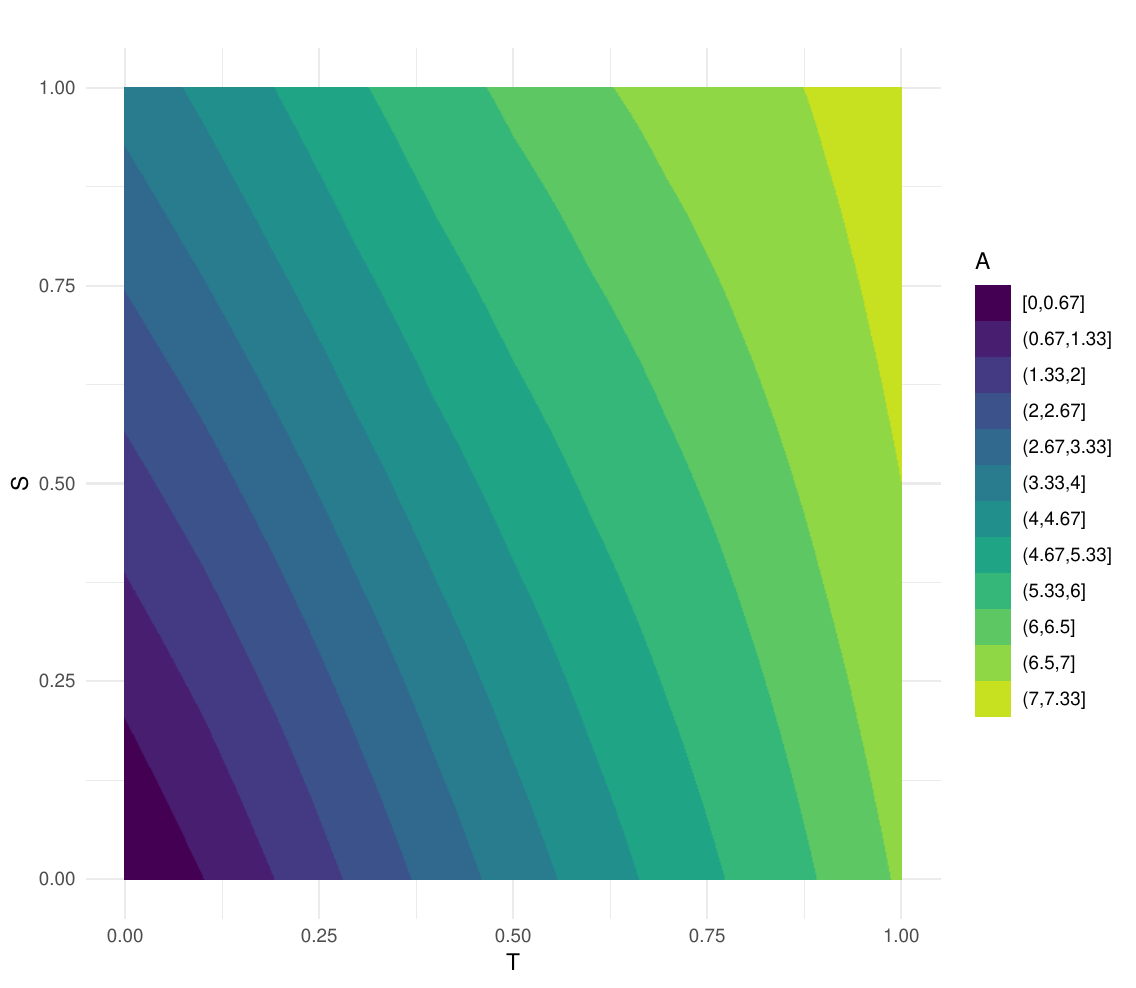}

\includegraphics[width=0.45\linewidth]{figs/BRGM_2dplot5.pdf}
\caption{
Bivariate representation of $\widehat{y}_1(S, T, \phi)$ for $\displaystyle \phi = \pi, \frac{2\pi}{3}, \frac{\pi}{2}, \frac{\pi}{3}, 0$, presented in order of appearance).
}
\label{fig:2Dplotcostalfun2}
\end{figure}

\end{document}